\documentclass[reqno]{amsart}
\usepackage{amssymb}

%\usepackage[notref,notcite]{showkeys}

%%%%%%%%%%%%%%%%%%%%%%%%%%%%%%%%%%%%
%\documentclass{elsart}
%\documentclass[11pt]{article}
%\usepackage{amsthm}
%\usepackage{amsfonts}
%\usepackage{amsmath}
%%%%%%%%%%%%%%%%%%%%%%%%%%%%%%%%%%%%%

\numberwithin{equation}{section}
\newtheorem{theorem}{Theorem}[section]
\newtheorem{proposition}[theorem]{Proposition}
\newtheorem{lemma}[theorem]{Lemma}
\newtheorem{corollary}[theorem]{Corollary}
\newtheorem{definition}[theorem]{Definition}
\newtheorem{remark}[theorem]{Remark}
\newtheorem{assumption}[theorem]{Assumption}
\newenvironment{acknowledgement}{\emph{Acknowledgement.}}

%%%%%%%%%%%%%%%%%%%%%%%%%%%%%%%%%%%%%%%%%%%%%%%%%%%%

\renewcommand\H{\mathcal{H}}
\newcommand\W{\mathcal{W}}
\renewcommand\L{\mathrm{L}}
\newcommand\R{\mathbb R}

\newcommand\C{\mathbb C}
\newcommand\Z{\mathbb Z}
\newcommand\K{\mathcal{K}}
\newcommand\D{\mathcal{D}}
\newcommand\Db{\mathbf{D}}
\newcommand\G{\mathcal{G}}
\newcommand\U{\mathcal{U}}
\newcommand\cT{\mathcal{T}}
\newcommand\Ll{\mathcal{L}}
\newcommand\A{\mathbf{A}}
\newcommand\M{\mathcal{M}}
\newcommand\x{\mathbf{x}}

\newcommand\di{\mathrm d}
\renewcommand\v{\mathbf{v}}

\newcommand{\Cc}{C_{\mathrm{c}}}

\newcommand\El{\mathbf E}
\newcommand\J{\mathbf J}
\newcommand\T{\mathcal{T}}
\newcommand\F{\mathbf{F}}

\newcommand\tr{\mathrm{tr}}

\newcommand\e{\mathrm{e}}
\newcommand{\la}{\langle}
\newcommand{\ra}{\rangle}
\renewcommand\P{\mathbb P}
\newcommand\E{\mathbb E}
\newcommand\Zz{\mathcal Z}

\newcommand\eps{\varepsilon}

%%%%%%%%%%%%%%%%%%%%%%%%%%%%%%%%%%%%%%%%%%%%%%%%%%%%
%%%%%%%%%%%%%%%%%%%%%%%%%%%%%%%%%%%%%%%%%%%%%%%%%%%%
%%%%%%%%%%%%%%%%%%%%%%%%%%%%%%%%%%%%%%%%%%%%%%%%%%%%

\begin{document}

\title[Linear response theory for random Schr\"odinger
operators]{Linear response theory for random Schr\"odinger operators
and noncommutative integration}

\author[Nicolas Dombrowski]{Nicolas Dombrowski}
\address[Dombrowski]{ Universit\'e de Cergy-Pontoise,
CNRS UMR 8088, Laboratoire AGM, D\'epartement de Math\'ematiques,
Site de Saint-Martin,
2 avenue Adolphe Chauvin,
F-95302 Cergy-Pontoise, France}
\email{ndombro@math.u-cergy.fr}

\author[F. Germinet]{Fran\c cois Germinet}
\address[Germinet]{ Universit\'e de Cergy-Pontoise,
CNRS UMR 8088, Laboratoire AGM, D\'epartement de Math\'ematiques,
Site de Saint-Martin,
2 avenue Adolphe Chauvin,
F-95302 Cergy-Pontoise, France}
\email{germinet@math.u-cergy.fr}

\thanks{2000 \emph{Mathematics Subject Classification.}
Primary 82B44; Secondary  47B80, 60H25}

%%%%%%%%%%%%%%%%%%%%%%%%%%%%%%%%%%%%%%%%%%%%%%%%%%%%

\begin{abstract}
We consider an ergodic Schr\"odinger operator
with magnetic field within the non-interacting particle approximation. Justifying the linear response theory, a rigorous
derivation of a Kubo formula for the electric conductivity tensor within this context can be found in a recent work of Bouclet, Germinet, Klein and Schenker \cite{BGKS}. 
If the Fermi level falls into a
region of localization, the well-known Kubo-St\u{r}eda formula for
the quantum Hall conductivity at zero temperature is recovered.
In this review we go along the lines of \cite{BGKS} but make a more systematic use of  noncommutative ${\rm L}^{p}$-spaces, leading to a somewhat more transparent proof.
\end{abstract}

%%%%%%%%%%%%%%%%%%%%%%%%%%%%%%%%%%%%%%%%%%%%%%%%%%%%
\maketitle

\tableofcontents

%%%%%%%%%%%%%%%%%%%%%%%%%%%%%%%%%%%%%%%%%%%%%%%%%%%%%%%%%%%%%%%%%%%%%%%%%%%%%%%%%
%%%%%%%%%%%%%%%%%%%%%%%%%%%%%%%%%%%%%%%%%%%%%%%%%%%%%%%%%%%%%%%%%%%%%%%%%%%%%%%%%
%%%%%%%%%%%%%%%%%%%%%%%%%%%%%%%%%%%%%%%%%%%%%%%%%%%%%%%%%%%%%%%%%%%%%%%%%%%%%%%%%

\section{Introduction}

In \cite{BGKS} the authors consider an ergodic Schr\"odinger operator with magnetic
field, and give a controlled derivation of a Kubo formula for the electric
conductivity tensor, validating the linear response theory within the
noninteracting particle approximation. For an adiabatically switched electric
field, they then recover the expected expression for the quantum Hall
conductivity whenever the Fermi energy lies either in a region of localization
of the reference Hamiltonian or in a gap of the spectrum.

The aim of this paper is to provide a pedestrian derivation of \cite{BGKS}'s result and to simplify their ``mathematical apparatus" by resorting more systematically to noncommutative  ${\rm L}^{p}$-spaces. We also state results for more general time dependent electric fields, so that AC-conductivity is covered as well. That von Neumann algebra and noncommutative integration play an important r\^ole in the context of random Schr\"odinger operators involved with the quantum Hall effect goes back to Bellissard, e.g. \cite{Be,BESB,SBB1,SBB2}. 

The electric conductivity tensor is usually expressed in
terms of a ``Kubo formula,"  derived via formal linear response theory. In the context of disordered media, where Anderson localization is expected (or proved), the electric conductivity has driven a lot of interest coming for several perspective. For time reversal systems and at zero temperature, the vanishing of the direct conductivity is a physically meaningful evidence of a localized regime \cite{FS,AG}. Another direction of interest is the connection between direct conductivity and the quantum Hall effect \cite{TKNN,St,Be,Ku,BESB,ASS,AG}. On an other hand the behaviour of the alternative conductivity at small frequencies within the region of localization is dictated by the celebrated Mott formula \cite{MD} (see \cite{KLP,KLM,KM} for recent important developments). Connected to conductivities, current-current correlations functions have recently droven a lot of attention as well (see \cite{BH,CGH} and references therein). 

During the past two decades a few papers
managed to shed some light on these derivations from the mathematical point of
view, e.g., \cite{Pa,Ku,Be,NB,ASS,BESB,SBB1,SBB2,AG,Na,ES,CJM,CNP}. While a great
amount of attention has been brought to the derivation from a Kubo formula of conductivities (in particular of the quantum Hall
conductivity), and to the study of these conductivities,
not much has been done concerning a controlled derivation of the linear
response and the Kubo formula itself; only the recent papers
\cite{SBB2,ES,BGKS,CJM,CNP} deal with this question.

In this note, the accent is put on the derivation of the linear response for which we shall present the main elements of proof, along the lines of \cite{BGKS} but using noncommutative integration. The required material is briefly presented or recalled from \cite{BGKS}. Further details and extended proofs will be found in \cite{Do}. We start by describing the noncommutative $\L^p$-spaces that are relevant in our context, and we state the properties that we shall need (Section~\ref{sectnci}). In Section~\ref{sectoperator} we define magnetic random Schr\"odinger operators and perturbations of these by time dependent electric fields, but in a suitable gauge where the electric field is given by a time-dependent
vector potential. We review from \cite{BGKS} the main tools that enter the derivation of the linear response, in particular the time evolution propagators. In Section~\ref{sectkubo} we compute rigorously the linear response
of the system forced by a time dependent electric field. We do it along the lines of \cite{BGKS} but within the framework of the noncommutative $\L^p$-spaces presented in Section~\ref{sectnci}. The derivation is achieved in
several steps. First we set up the Liouville equation which describes the time
evolution of the density matrix under the action of a time-dependent electric
field (Theorem~\ref{thmrho}). In a standard way, this evolution equation can be written as
an integral equation, the so-called Duhamel formula. Second, we compute the
net current per unit volume induced by the electric field and prove that it is
differentiable with respect to the electric field at zero field. This yields in fair generality the desired Kubo formula for the electric conductivity tensor, for any non zero adiabatic parameter (Theorem~\ref{thmsgmjk} and Corollary~\ref{corsgmjk}). 
The adiabatic limit is then performed in order to compute the direct / ac conductivity at zero temperature (Theorem~\ref{thmHall}, Corollary~\ref{corHall} and Remark~\ref{remAC}). In particular we recover the expected expression for the quantum Hall conductivity, the Kubo-St\u{r}eda formula, as in \cite{Be,BESB}. At positive temperature, we note that, while the existence of the electric conductivity \emph{measure} can easily be derived from that Kubo formula \cite{KM}, proving that  the conductivity itself, i.e. its density, exists and is finite remains out of reach.

\begin{acknowledgement} 
We thank warmly Vladimir Georgescu for enlightening discussions on noncommutative integration, as well as A. Klein for useful comments.
\end{acknowledgement}

%%%%%%%%%%%%%%%%%%%%%%%%%%%%%%%%%%%%%%%%%%%%%%%%%%%%%%%%%%%%%%%%%%%
%%%%%%%%%%%%%%%%%%%%%%%%%%%%%%%%%%%%%%%%%%%%%%%%%%%%%%%%%%%%%%%%%%%
%%%%%%%%%%%%%%%%%%%%%%%%%%%%%%%%%%%%%%%%%%%%%%%%%%%%%%%%%%%%%%%%%%%

\section{Covariant measurable operators and noncommutative ${\rm L}^{p}$-spaces}
\label{sectnci}

In this section we construct the noncommutative ${\rm L}^{p}$-spaces that are relevant for our analysis. We first recall the underlying Von Neumann alegbra of observables and we equip it with the so called ``trace per unit volume". We refer to \cite{Dix,Te} for the material. We shall skip some details and proofs for which we also refer to \cite{Do}.

\subsection{Observables}
Let $\mathcal{H}$ be a separable Hilbert space (in our context $\H={\rm L}^2(\R^d)$). Let $\mathcal{Z}$ be an abelian locally compact group and $ U=\lbrace U_{a}\rbrace_{a\in\mathcal{Z}}$  a  unitary projective representation of $\mathcal{Z}$ on $\mathcal{H}$, i.e. 
\begin{itemize}
\item$\  U_{a} U_{b} = \xi(a,b)U_{a+b}$, where $\xi(a,b)\in\C$, $|\xi(a,b)|=1$;
\item$\ U_{e}=1$;
\end{itemize}
Now we take a set of orthogonal projections on $\mathcal{H}$ , $\chi:=\lbrace \chi_{a} \rbrace _{a \in\mathcal{Z}}$, $\mathcal{Z}\rightarrow\mathcal{B}(\mathcal{H})$.
Such that if a $\neq b \Rightarrow \chi_{a} \chi_{b} = 0 $ and 
$\sum _{a \in\mathcal{Z}} $  
$\chi_a $  =1. Furthermore one requires a covariance relation or a stability relation of $\chi $
under $U$  i.e 
$\ U_{a}\chi_{b}U^{*}_{a}=\chi_{a+b}$.

Next to this Hilbertian structure (representing the ``physical" space), we consider a probability space $(\Omega,\mathcal{F},\P)$ (representing the presence of the disorder) that is ergodic under the action of a group $\tau=\lbrace\tau_{a} \rbrace_{a \in\mathcal{Z}}$, 
that is,
\begin{itemize}
\item $\forall a\in\Zz, \ \tau_{a}:\Omega\rightarrow\Omega$ is a measure preserving isomorphism;
\item $\forall a,b\in\Zz, \ \tau_{a} \circ\tau_{b}= \tau_{a +b} $;  
\item $\tau_{e}=1 $ where $e$ is the neutral element of $\mathcal{Z}$  and thus $\tau_{a}^{-1}= \tau_{-a}, \ \forall a\in\Zz $;
\item If $\mathcal{A}\in \mathcal{F}$ is invariant under $\tau$, then $\P(A)=0$ or $1$.
\end{itemize}

With these two structures at hand we define the Hilbert space
\begin{equation}
\tilde{\mathcal{H}}= \int^{\oplus}_{\Omega}\mathcal{H}\, \mathrm{d}\P(\omega):=\L^{2}(\Omega,\P,\mathcal{H})\simeq \mathcal{H}\otimes\L^{2}(\Omega,\P),
\end{equation}
equipped with the inner product
\begin{equation}
\langle\varphi,\psi\rangle_{\tilde{\H}} = \int_{\Omega}\langle\varphi(\omega),\psi(\omega)\rangle_{\H} \, \mathrm{d}\P(\omega), \
\forall\varphi,\psi\in\tilde{\mathcal{H}^{2}}.
\end{equation}

We are interested in bounded operators on $\tilde{\mathcal{H}}$ that are decomposable elements $A=(A_\omega)_{\omega\in\Omega}$, in the sense that they commute with the diagonal ones. Measurable operators are defined as decomposable operators such that for all measurable vector's field $\lbrace\varphi(\omega),\omega\in\Omega\rbrace$, 
the set $\lbrace A(\omega)\varphi(\omega),\omega\in\Omega\rbrace$
is measurable too.
Measurable decomposable operators are called essentially bounded if $\omega\rightarrow\Vert A_{\omega}\Vert_{\mathcal{L}({\mathcal{H}})}$
is a element of $ L^{\infty}(\Omega,\P)$. We set, for such $A$'s,
\begin{equation}
\Vert{A}\Vert_{\mathcal{L}(\tilde{\mathcal{H}})}= \Vert A \Vert_{\infty}= \mathrm{ess-sup}_{\Omega}\Vert A(\omega)\Vert ,
\end{equation}
and define the following von Neumann algebra
\begin{equation}
\mathcal{K} = \ L^{\infty}(\Omega,\P,\mathcal{L}(\mathcal{H})) = \lbrace A:\Omega\rightarrow\mathcal{L}(\mathcal{H}), measurable \ \| A \|_{\infty} <
\infty\rbrace.
\end{equation}
There exists an isometric isomorphism betwen
$\mathcal{K}$ and decomposable operators on 
$\mathcal{L}(\tilde{\mathcal{H}})$. 

We shall work with observables of $\mathcal{K}$ that satisfy the so-called covariant property.

\begin{definition}
$A\in\mathcal{K}$ is covariant iff
\begin{equation}\label{cov}
U_{a}A(\omega) U^{\star}_{a}=A(\tau_a\omega)\ ,\forall a\in\mathcal{Z},\,\forall\omega\in\Omega.
\end{equation}  
We set
\begin{equation}
\mathcal{K}_{\infty}=\lbrace A\in\mathcal{K},  A \mbox{ is covariant}\rbrace.
\end{equation}
\end{definition}

If $\tilde{U_{a}}:= U_{a}\otimes\tau(-a)$,with the slight notation abuse where we note $\tau$ for the action induct by $\tau$ on $\L^{2}$ and $\tilde{U}=(\tilde{U_{a}})_{a\in\Zz}$, we note that
\begin{eqnarray}
\mathcal{K}_{\infty}& = &\lbrace A\in\mathcal{K}, \, \forall a\in\mathcal{Z}, [A,\tilde{U}_{a}]=0\rbrace\\
&=&\mathcal{K}\cap (\tilde{{U}})',
\end{eqnarray}
so that $\mathcal{K}_{\infty}$ is again a von Neumann algebra.

\subsection{Noncommutative integration}

The von Neumann algebra $\mathcal{K}_{\infty}$ is equipped with the faithfull, normal and semi-finite trace
\begin{equation}
\mathcal{T}(A) :=\E\lbrace \tr(\chi_{e}A(\omega)\chi_{e})\rbrace,
\end{equation}
where ``$\tr$" denotes the usual trace on the Hilbert space $\H$.In the usual context of the Anderson model this is nothing but the  trace per unit volume,  by the Birkhoff Ergodic Theorem, whenever 
$\mathcal{T}(|A|)<\infty$, one has
\begin{equation}
\mathcal{T}(A)=\lim_{|\Lambda_{L}|\to\infty}\dfrac{1}{\vert\Lambda_{L}\vert}\tr(\chi_{\Lambda_{L}}A_{\omega}\chi_{\Lambda_{L}}),
\end{equation}
where $\Lambda_{L}\subset\mathcal{Z}$ and $\chi_{\Lambda_{L}}=\sum_{a\in\Lambda_{L}} \chi_a$. There is a natural topology associated to von Neumann algebras equipped with such a trace. It is defined by the following basis of neighborhoods:
\begin{equation}
N(\epsilon ,\delta)=\lbrace A\in\mathcal{K}_{\infty},\,  \exists P\in\mathcal{K}^{proj}_{\infty} ,\Vert AP\Vert_{\infty}<\eps\ ,\mathcal{T}(P^{\perp})<\delta\rbrace,
\end{equation}
where $\mathcal{K}^{proj}_{\infty}$ denotes the set of projectors of 
$\mathcal{K}_{\infty}$. It is a well known fact that 
\begin{equation}
A\in N(\eps,\delta)\Longleftrightarrow\mathcal{T}(\chi_{]\eps,\infty[}(\vert A\vert))\leq\delta.
\end{equation}
As pointed out to us by V. Georgescu, this topology can also be generated by the following Frechet-norm on $\mathcal{K}_{\infty}$ \cite{Geo}:
\begin{equation}
\| A\|_{\mathcal{T}}= \inf_{P\in\mathcal{K}^{proj}_{\infty}}\max\lbrace\Vert AP\Vert_{\infty} , \mathcal{T}(P^{\perp})\rbrace.
\end{equation}
Let us denote by $\mathcal{M}(\mathcal{K}_{\infty})$ the set of all $\mathcal{T}$-measurable operators, namely
the completion of $\mathcal{K}_{\infty}$ with respect to this topology.
It is a well established fact from noncommutative integration that
\begin{theorem}
$\mathcal{M}(\mathcal{K}_{\infty})$ is a Hausdorff topological $\ast$-algebra.
, in the sens that all the algebraic operations are continuous for the $\tau$-measure topology.
\end{theorem}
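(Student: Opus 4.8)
The plan is to first verify that $\mathcal{K}_\infty$ itself, equipped with the measure topology (equivalently, with the translation invariant metric $d(A,B)=\|A-B\|_\mathcal{T}$), is a Hausdorff topological $\ast$-algebra, and then to transport all the structure to the completion $\mathcal{M}(\mathcal{K}_\infty)$ by uniform continuity and density. The only facts about $\mathcal{K}_\infty$ that I would invoke are the standard lattice and trace inequalities for its projections: for $P,Q\in\mathcal{K}^{proj}_\infty$ one has $(P\wedge Q)^\perp=P^\perp\vee Q^\perp$ and, by the Kaplansky parallelogram law combined with the trace property of $\mathcal{T}$, $\mathcal{T}(P^\perp\vee Q^\perp)\le\mathcal{T}(P^\perp)+\mathcal{T}(Q^\perp)$; moreover the left and right support projections of any $T\in\mathcal{K}_\infty$ are Murray--von Neumann equivalent, hence have the same $\mathcal{T}$-value. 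Hausdorffness is then immediate, since $\bigcap_{\eps,\delta}N(\eps,\delta)=\{0\}$ because $\mathcal{T}$ is faithful.

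Continuity of the vector space operations and of the involution is elementary: if $\|AP\|_\infty<\eps_1$ and $\|BQ\|_\infty<\eps_2$ with $\mathcal{T}(P^\perp)<\delta_1$ and $\mathcal{T}(Q^\perp)<\delta_2$, then $R:=P\wedge Q$ witnesses $A+B\in N(\eps_1+\eps_2,\delta_1+\delta_2)$, scalar multiplication is handled the same way, and $\|\cdot\|_\mathcal{T}$ turns out to be subadditive; for the involution, polar decomposition $A=V|A|$ gives $|A^\ast|=V|A|V^\ast$, whence $\mathcal{T}(\chi_{]\eps,\infty[}(|A^\ast|))=\mathcal{T}(\chi_{]\eps,\infty[}(|A|))$ and $A\mapsto A^\ast$ is a $\|\cdot\|_\mathcal{T}$-isometry. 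The substantial point, and the one I expect to be the main obstacle, is joint continuity of multiplication: $ABR$ must be controlled when $A$ and $B$ are small only \emph{in measure}, so norm bounds on $A,B$ are unavailable. The key remark is that, given $A,B\in\mathcal{K}_\infty$ and a projection $P$ with $\|AP\|_\infty<\eps_1$, the projection $R_0:=\mathbf 1-\supp(B^\ast P^\perp B)$ onto $\{x:\,Bx\in P\H\}$ satisfies $P^\perp BR_0=0$ exactly, while $\mathcal{T}(R_0^\perp)=\mathcal{T}(\supp(P^\perp BB^\ast P^\perp))\le\mathcal{T}(P^\perp)$ (using $P^\perp BB^\ast P^\perp\le\|B\|_\infty^2\,P^\perp$ and the equivalence of left and right supports). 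Hence if also $\|BQ\|_\infty<\eps_2$ and $\mathcal{T}(Q^\perp)<\delta$, the projection $R:=R_0\wedge Q\le Q$ obeys $\|ABR\|_\infty=\|APBR\|_\infty\le\|AP\|_\infty\|BQ\|_\infty<\eps_1\eps_2$ and $\mathcal{T}(R^\perp)<2\delta$; this is exactly $N(\eps_1,\delta)\cdot N(\eps_2,\delta)\subseteq N(\eps_1\eps_2,2\delta)$, i.e.\ continuity of the product at $0$. Specializing $\eps_2=\|B\|_\infty$, resp.\ $\eps_1=\|A\|_\infty$, gives continuity of $A\mapsto AB$, resp.\ $B\mapsto AB$, for fixed $B$, resp.\ $A$, and writing $AB-A_0B_0=(A-A_0)(B-B_0)+(A-A_0)B_0+A_0(B-B_0)$ upgrades this to joint continuity everywhere. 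So $\mathcal{K}_\infty$ with the measure topology is a metrizable Hausdorff topological $\ast$-algebra.

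It then remains to lift this to $\mathcal{M}(\mathcal{K}_\infty)$. Addition, scalar multiplication and the involution are uniformly continuous, so they extend continuously to the completion, and the algebraic identities persist by density. Multiplication is not uniformly continuous globally, but I would circumvent this via the fact that a $\|\cdot\|_\mathcal{T}$-Cauchy sequence $(A_n)$ is bounded in measure, i.e.\ for each $\delta>0$ there is $M_\delta$ with $A_n\in N(M_\delta,\delta)$ for all $n$ (split $A_n=(A_n-A_N)+A_N$ for $N$ large, using that $A_N\in\mathcal{K}_\infty$ is norm-bounded). Feeding this into the estimates $N(\eps_1,\delta)\cdot N(\eps_2,\delta)\subseteq N(\eps_1\eps_2,2\delta)$ and $A\cdot N(\eps,\delta)\subseteq N(\|A\|_\infty\eps,\delta)$ shows that $(A_nB_n)$ is Cauchy whenever $(A_n),(B_n)$ are, with limit depending only on $\lim A_n$ and $\lim B_n$; this yields the continuous bilinear extension, and associativity, the $\ast$-identities and continuity of all operations follow by approximation. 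Since the completion of a Hausdorff uniform space is Hausdorff and complete, $\mathcal{M}(\mathcal{K}_\infty)$ is a (complete) Hausdorff topological $\ast$-algebra. Alternatively, one may identify $\mathcal{M}(\mathcal{K}_\infty)$ concretely with the $\ast$-algebra of $\mathcal{T}$-measurable closed densely defined operators affiliated with $\mathcal{K}_\infty$, in the sense of Segal and Nelson, and verify the assertions there; see \cite{Te,Do}.
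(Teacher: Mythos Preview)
Your argument is correct and is precisely the standard route taken in the references \cite{Te,Dix}: one first shows that the $N(\eps,\delta)$ satisfy the product inclusion $N(\eps_1,\delta)\cdot N(\eps_2,\delta)\subseteq N(\eps_1\eps_2,2\delta)$ via the support-projection trick you describe, deduces that $\mathcal{K}_\infty$ is a Hausdorff topological $\ast$-algebra in the measure topology, and then passes to the completion using boundedness in measure of Cauchy sequences.

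However, you should be aware that the paper does not give its own proof of this theorem at all: it is stated as ``a well established fact from noncommutative integration'' and the reader is referred to \cite{Dix,Te} and to \cite{Do} for details. So there is no proof in the paper to compare against; your write-up is essentially what one finds upon following those references, and the alternative concrete realization via closed $\mathcal{T}$-measurable operators that you mention at the end is exactly the Segal--Nelson picture invoked in \cite{Te}.
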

\begin{definition}
A linear subspace $\mathcal{E}\subseteq\mathcal{H}$ is called $\mathcal{T}$-dense if, 
$\forall\delta\in\R^{+}$, there exists a projection $P\in\mathcal{K}_{\infty}$ 
such that $\ P\mathcal{H}\subseteq\mathcal{E}$ and $\mathcal{T}(P^{\perp})\leq\delta.$
\end{definition}

It turns out that any element $A$ of $\mathcal{M}(\mathcal{K}_{\infty})$ can be represented as an (possibly unbounded) operator, that we shall still denote by $A$, acting on $\tilde{\mathcal{H}}$ with a domain $D(A)=\lbrace\varphi\in\tilde{\mathcal{H}}, \ A\varphi\in\tilde{\mathcal{H}}\rbrace$ that is $\mathcal{T}$-densily defined.
Then, adjoints, sums and products of elements of $\mathcal{M}(\mathcal{K}_{\infty})$ are defined as usual adjoints, sums and products of unbounded operators.

For any $0<p<\infty$, we set
\begin{equation}
\mathrm{L}^{p}(\mathcal{K}_{\infty}) := \overline{\lbrace x\in K_{\infty}, \ \mathcal{T}( |x|^p )<\infty \rbrace}^{\|\cdot\|_p} =
\lbrace x\in\mathcal{M}(\mathcal{K}_{\infty}) , \  \mathcal{T}(\vert x\vert^p)<\infty \rbrace,
\end{equation}
where the second equality is actually a theorem.
For $p\ge 1$, the spaces $\mathrm{L}^{p}(\mathcal{K}_{\infty})$ are  Banach spaces in which  $\mathrm{L}^{p,o}(\mathcal{K}_{\infty}) :=\mathrm{L}^{p}(\mathcal{K}_{\infty})\cap \mathcal{K}_{\infty}$ are dense by definition. For $p=\infty$, in analogy with the commutative case, we set $\mathrm{L}^{\infty}(\mathcal{K}_{\infty})=\mathcal{K}_{\infty}$.

Noncommutative H\"older inequalities hold: for any $0<,p,q,r\le \infty$ so that $p^{-1} +q^{-1}=r^{-1}$, if $A\in\mathrm{L}^{p}(\mathcal{K}_{\infty})$ and $B\in\mathrm{L}^{q}(\mathcal{K}_{\infty})$, then the product $AB\in\mathcal{M}(\mathcal{K}_{\infty})$ belongs to $\mathrm{L}^{r}(\mathcal{K}_{\infty})$ with
\begin{equation}\label{Holder}
\Vert AB\Vert_{r}\leq\Vert A\Vert_{p}\Vert B\Vert_{q}.
\end{equation}
In particular, for all $A\in\mathrm{L}^{\infty}(\mathcal{K}_{\infty})$ and $B\in\mathrm{L}^{p}(\mathcal{K}_{\infty})$,
\begin{equation}
\Vert AB\Vert_{p}\leq\Vert A\Vert_{\infty}\Vert B\Vert_{p} 
\mbox{ and } 
\Vert BA\Vert_{p}\leq\Vert A\Vert_{\infty}\Vert B\Vert_{p}\, ,
\end{equation}
so that $\mathrm{L}^{p}(\mathcal{K}_{\infty})$-spaces are $\mathcal{K}_{\infty}$ two-sided submodules of $\mathcal{M}(\mathcal{K}_{\infty})$.
As another consequence, bilinear forms 
$
\mathrm{L}^{p,o}(\K_\infty)\times \mathrm{L}^{q,o}(\K_\infty)\ni(A,B)\mapsto \mathcal{T}(AB)\in\C
$ 
continuously extends to  bilinear maps defined on $\mathrm{L}^{p}(\K_\infty)\times \mathrm{L}^{q}(\K_\infty)$.

\begin{lemma}\label{lemduality}
Let $A\in\mathrm{L}^{p}(\K_\infty)$,  $p\in [1,\infty[$ be given, and suppose $\T(AB)=0$
 for all $B\in\mathrm{L}^q(\K_\infty)$, $p^{-1}+q^{-1}=1$. Then $A=0$.
\end{lemma}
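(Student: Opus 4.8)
The plan is to reduce to the case $A\ge 0$ via the polar decomposition and then to \emph{separate} $|A|$ from the zero functional by testing against truncated spectral projections. First I would write $A=U|A|$ for the polar decomposition. Since $A$ is a $\T$-measurable operator affiliated with $\K_\infty$, the partial isometry $U$ lies in $\K_\infty=\L^\infty(\K_\infty)$, the modulus $|A|$ is a positive element of $\M(\K_\infty)$, and $\T(|A|^p)=\Vert A\Vert_p^p<\infty$, so $|A|\in\L^p(\K_\infty)$; moreover $U^*U$ is the support projection $s$ of $|A|$, whence $s|A|=|A|$ and $U^*A=|A|$. It therefore suffices to show $|A|=0$.

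Next I would bring in the spectral projections $e_n:=\chi_{[1/n,n]}(|A|)$, $n\in\N$, furnished by the Borel functional calculus of $|A|$. They lie in $\K_\infty$, they increase to $s$ as $n\to\infty$, and from $|A|^p\ge n^{-p}e_n$ and $\T(|A|^p)<\infty$ one gets $\T(e_n)<\infty$. Hence $B_n:=e_nU^*$ is a bounded operator with $\T(|B_n|^q)=\T(Ue_nU^*)=\T(e_n)<\infty$, so $B_n\in\L^q(\K_\infty)$ for the conjugate exponent $q$ (and indeed for every $q\in[1,\infty]$, which covers $p=1$). Feeding $B_n$ into the hypothesis and using cyclicity of $\T$ — legitimate because, by the Hölder inequality \eqref{Holder}, every product occurring below lies in $\L^1(\K_\infty)$ — I would obtain
\begin{equation}
0=\T(AB_n)=\T(Ae_nU^*)=\T(e_nU^*A)=\T(e_n|A|)=\T(|A|e_n)\qquad\text{for every }n .
\end{equation}

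Finally, since $e_n\uparrow s$ and $\T$ is normal, $\T(|A|e_n)=\T\big(|A|^{1/2}e_n|A|^{1/2}\big)\uparrow\T\big(|A|^{1/2}s|A|^{1/2}\big)=\T(|A|)$; together with the previous line this forces $\T(|A|)=0$, and faithfulness of $\T$ with $|A|\ge 0$ then gives $|A|=0$, so $A=U|A|=0$.

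I do not expect a genuine obstacle; the one point to watch is that for $p<\infty$ the operator $A$ need not be bounded, so everything must be read inside $\M(\K_\infty)$ and one has to invoke the right structural facts of noncommutative integration — existence of the polar decomposition with $U\in\K_\infty$, the trace identity $\T(XY)=\T(YX)$ for $X\in\L^p(\K_\infty),\,Y\in\L^q(\K_\infty)$, and normality of $\T$ — all of which are standard (and available in \cite{Do}). One could also avoid the polar decomposition by splitting $A$ into its self-adjoint real and imaginary parts and testing against signed spectral projections, but the argument above is shorter and uniform in $p$.
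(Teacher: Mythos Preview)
Your argument is correct and complete. The paper does not actually supply its own proof of Lemma~\ref{lemduality}; it is simply recorded among the standard structural facts of noncommutative $\L^p$-theory, with pointers to \cite{Dix,Te,Do}. So there is no in-paper argument to compare against. What you have written is the standard proof of the separation property underlying the duality $(\L^p)^*\cong\L^q$: polar-decompose, truncate the modulus by spectral projections so that the test operators $B_n$ land simultaneously in $\K_\infty$ and in $\L^1$, and then invoke normality and faithfulness of $\T$.

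One remark on internal logic: your displayed chain of equalities appeals to the cyclicity $\T(XY)=\T(YX)$ for $X\in\L^p$, $Y\in\L^q$, which in the paper is stated as Lemma~\ref{lemcyclicity}, \emph{after} the present lemma. There is no genuine circularity, since both statements are imported from the literature rather than proved here, but if you prefer to make the argument independent of Lemma~\ref{lemcyclicity} you can: since $e_n$ commutes with $|A|$ and $|A|e_n\le n\cdot 1$, every factor in your chain lies in $\K_\infty$, and setting $a=U(|A|e_n)^{1/2}$ one gets $\T(Ae_nU^*)=\T(aa^*)=\T(a^*a)=\T(|A|e_n)$ directly from the primitive trace identity $\T(a^*a)=\T(aa^*)$.
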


The case $p=2$ is of particular interest since $\mathrm{L}^{2}(\mathcal{K}_{\infty})$ equipped with the sesquilinear form $\langle A, B \rangle_{\mathrm{L}^{2}} = \cT(A^\ast B)$ is a Hilbert space. The corresponding norm reads
\begin{equation}
\Vert A \Vert_{2}^{2} =\int_{\Omega} \tr(\chi_{e} A^{\ast}_\omega A_\omega \chi_{e})\mathrm{d}\P(\omega)
 = \int_{\Omega}\Vert A_\omega\chi_{e}\Vert_{2}^{2}\mathrm{d}\P(\omega).
\end{equation}
(Where $\|\cdot\|_2$ denotes the Hilbert-Schmidt norm.) From the case $p=2$, we can derive the centraliy of the trace. Indeed, by covariance and using the centrality of the usual trace, it follows that $\T(AB)=\T(BA)$ whenever $A,B\in\K_\infty$. By density we get the following lemma.

\begin{lemma}\label{lemcyclicity}
Let $A\in\mathrm{L}^{p}(\K_\infty)$ and $B\in\mathrm{L}^q(\K_\infty)$, $p^{-1}+q^{-1}=1$ be given. Then $\T(AB)=\T(BA)$.
\end{lemma}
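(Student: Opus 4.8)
The plan is to prove the cyclicity of the trace, $\T(AB) = \T(BA)$ for $A \in \L^p(\K_\infty)$ and $B \in \L^q(\K_\infty)$ with $p^{-1} + q^{-1} = 1$, by a standard density argument, reducing to the bounded case which has already been established in the text via covariance and centrality of the ordinary trace on $\H$. The key point is that the bilinear maps $(A,B) \mapsto \T(AB)$ and $(A,B) \mapsto \T(BA)$ are both continuous on $\L^p(\K_\infty) \times \L^q(\K_\infty)$, they agree on the dense subset $\L^{p,o}(\K_\infty) \times \L^{q,o}(\K_\infty)$, hence they agree everywhere.

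First I would recall that by the noncommutative Hölder inequality \eqref{Holder} with $r = 1$, the products $AB$ and $BA$ lie in $\L^1(\K_\infty)$ whenever $A \in \L^p(\K_\infty)$, $B \in \L^q(\K_\infty)$, so that $\T(AB)$ and $\T(BA)$ are well-defined, and moreover $|\T(AB)| \le \|AB\|_1 \le \|A\|_p \|B\|_q$ and likewise for $\T(BA)$. This gives the required joint continuity of both bilinear forms in the $\L^p \times \L^q$ topology: if $A_n \to A$ in $\L^p$ and $B_n \to B$ in $\L^q$, then writing $A_n B_n - AB = A_n(B_n - B) + (A_n - A)B$ and applying Hölder, one gets $\|A_n B_n - AB\|_1 \to 0$, hence $\T(A_n B_n) \to \T(AB)$; the same for the reversed products. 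Next I would invoke the density of $\L^{p,o}(\K_\infty) = \L^p(\K_\infty) \cap \K_\infty$ in $\L^p(\K_\infty)$ (stated in the text) to pick sequences $A_n \in \L^{p,o}(\K_\infty)$, $B_n \in \L^{q,o}(\K_\infty)$ with $A_n \to A$ and $B_n \to B$ in the respective norms. For the bounded covariant operators $A_n, B_n \in \K_\infty$ the identity $\T(A_n B_n) = \T(B_n A_n)$ is precisely the statement already derived from covariance and the centrality of $\tr$ on $\H$. Passing to the limit on both sides using the continuity established above yields $\T(AB) = \T(BA)$.

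The only subtlety worth flagging — and the part I would be careful about — is the endpoint $p = \infty$ (and symmetrically $q = 1$): there $\L^{\infty,o}(\K_\infty) = \K_\infty$ is not dense in $\L^\infty(\K_\infty) = \K_\infty$ in the norm topology, so one cannot approximate the $\L^\infty$ factor. In that case, however, one approximates only the $\L^1$ factor: take $B \in \L^1(\K_\infty)$ and $B_n \in \L^{1,o}(\K_\infty)$ with $B_n \to B$ in $\L^1$, keep $A \in \K_\infty$ fixed, use $\T(A B_n) = \T(B_n A)$ from the bounded case, and pass to the limit via $|\T(A(B_n - B))| \le \|A\|_\infty \|B_n - B\|_1 \to 0$ and similarly for $\T((B_n - B)A)$. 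So no genuine obstacle arises; the proof is a routine two-step limiting argument, and the ``real content'' — the identity for bounded covariant operators — has already been supplied in the discussion of the $p = 2$ case preceding the lemma.
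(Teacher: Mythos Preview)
Your argument is correct and matches the paper's own approach: the text establishes $\T(AB)=\T(BA)$ for $A,B\in\K_\infty$ via covariance and centrality of $\tr$, and then simply says ``by density we get the following lemma,'' which is exactly the H\"older-plus-density argument you spell out. Your treatment of the endpoint $p=\infty$ is a welcome clarification of a detail the paper leaves implicit.
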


We shall also make use of the following observation.

\begin{lemma}\label{lemlimstrong}
Let $A\in\mathrm{L}^{p}(\K_\infty)$ and $(B_n)$ a sequence of elements of $\K_\infty$ that  converges strongly to $B\in\K_\infty$. Then $AB_n$ converges to $AB$ in $\mathrm{L}^{p}(\K_\infty)$.
\end{lemma}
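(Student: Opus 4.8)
The plan is to reduce everything to the Hilbert--space case $p=2$, and there to exploit that membership of $A$ in $\mathrm{L}^{p}(\K_\infty)$ forces the fibres $A_\omega^{*}\chi_e$ to be Hilbert--Schmidt --- which is precisely what converts strong convergence of $(B_n)$ into norm convergence of $(AB_n)$. \emph{Reductions.} Replacing $B_n$ by $B_n-B$, I may assume $B=0$, $B_n\to0$ strongly, and (Banach--Steinhaus) $M:=\sup_n\|B_n\|_\infty<\infty$. By the H\"older inequality \eqref{Holder}, $\|(A-A')B_n\|_p\le M\|A-A'\|_p$, so it suffices to prove the claim for $A$ in a $\|\cdot\|_p$-dense subset and then run an $\eps$-argument. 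Using density of $\mathrm{L}^{p,o}(\K_\infty)$ and the truncations $A\,\chi_{[1/R,R]}(|A|)$ --- whose $\|\cdot\|_p$-error is $\T\bigl(|A|^{p}\chi_{[0,1/R)\cup(R,\infty)}(|A|)\bigr)\to0$ as $R\to\infty$ by normality of $\T$ --- I may further assume that $A\in\K_\infty$ is bounded with support projection $p_A$ of finite trace, $\T(p_A)\le R^{p}\,\T(|A|^{p})<\infty$; then also $A,A^{*},V\in\mathrm{L}^{2}(\K_\infty)$, where $A=V|A|$ is the polar decomposition.

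\emph{Reduction to $p=2$.} I would show $\|AB_n\|_p\le C(A,M,p)\,\|AB_n\|_2^{\theta(p)}$ with $\theta(p)>0$ independent of $n$. For $p\ge2$ this is the interpolation bound $\|X\|_p\le\|X\|_\infty^{1-2/p}\|X\|_2^{2/p}$ applied to $X=AB_n$, using $\|AB_n\|_\infty\le\|A\|_\infty M$. For $1\le p<2$ one observes that $|(AB_n)^{*}|^{2}=AB_nB_n^{*}A^{*}$ is supported in the range projection $q=VV^{*}$ of $A$, that $\T(q)=\T(V^{*}V)=\T(p_A)<\infty$ (by Lemma~\ref{lemcyclicity}, since $V\in\mathrm{L}^{2}(\K_\infty)$), and that $0\le|(AB_n)^{*}|\le\|AB_n\|_\infty\,q$; H\"older \eqref{Holder} inside the finite-trace corner $q\K_\infty q$ then gives $\|AB_n\|_p\le\T(q)^{1/p-1/2}\|AB_n\|_2$.

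\emph{The case $p=2$.} Since $*$ is a $\|\cdot\|_2$-isometry,
\[
\|AB_n\|_2^{2}=\|B_n^{*}A^{*}\|_2^{2}=\int_\Omega\bigl\|B_{n,\omega}^{*}\,A_\omega^{*}\chi_e\bigr\|_{\mathrm{HS}}^{2}\,\di\P(\omega),
\]
and $K_\omega:=A_\omega^{*}\chi_e$ is Hilbert--Schmidt for a.e.\ $\omega$, with $\int_\Omega\|K_\omega\|_{\mathrm{HS}}^{2}\,\di\P=\|A^{*}\|_2^{2}<\infty$. For such $\omega$, a singular-value expansion $K_\omega=\sum_j s_j|u_j\rangle\langle v_j|$ with $\{u_j\}$ orthonormal and $\sum_j s_j^{2}=\|K_\omega\|_{\mathrm{HS}}^{2}$ gives $\|B_{n,\omega}^{*}K_\omega\|_{\mathrm{HS}}^{2}=\sum_j s_j^{2}\,\|B_{n,\omega}^{*}u_j\|_\H^{2}$. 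Passing to a subsequence and using separability of $\H$, strong convergence $B_n^{*}\to0$ on $\tilde{\H}$ descends to $B_{n,\omega}^{*}\to0$ strongly on $\H$ for a.e.\ $\omega$; since $\|B_{n,\omega}^{*}u_j\|_\H^{2}\le M^{2}$ and $\sum_j s_j^{2}<\infty$, dominated convergence in $j$ gives $\|B_{n,\omega}^{*}K_\omega\|_{\mathrm{HS}}\to0$ a.e., and since $\|B_{n,\omega}^{*}K_\omega\|_{\mathrm{HS}}^{2}\le M^{2}\|K_\omega\|_{\mathrm{HS}}^{2}\in\mathrm{L}^{1}(\Omega)$, dominated convergence in $\omega$ yields $\|AB_n\|_2\to0$ along that subsequence; the subsequence being arbitrary, $\|AB_n\|_2\to0$, which with the previous step finishes the proof.

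\emph{Main obstacle.} The genuine difficulty is that the convergence is only strong, so $\|B_n-B\|_\infty\not\to0$ and \eqref{Holder} alone is useless: one must convert the $\mathrm{L}^{p}$-hypothesis on $A$ into the Hilbert--Schmidt property of $A_\omega^{*}\chi_e$. The delicate point is that Step~3 uses strong convergence of the \emph{adjoints} $B_n^{*}\to B^{*}$ as well; this is automatic in the applications (where the $B_n$ are self-adjoint or unitary) and more generally under strong${}^{*}$-convergence, and it must be read into the hypothesis --- without it the statement fails (take $\H=\ell^{2}(\Z)\otimes\ell^{2}(\N)$, $\mathcal{Z}=\Z$ acting by the bilateral shift on the first factor, $A=\mathbf 1\otimes Q$ with $Q$ Hilbert--Schmidt and $B_n=\mathbf 1\otimes S^{*n}$ with $S$ the unilateral shift on $\ell^{2}(\N)$: then $B_n\to0$ strongly yet $\|AB_n\|_p=\|Q\|_p$ for every $n$).
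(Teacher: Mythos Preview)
The paper itself does not prove this lemma --- it is simply stated, with details deferred to the thesis \cite{Do} --- so there is no argument in the paper to compare against directly.

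Your proof strategy is sound: density plus truncation reduces to $A\in\K_\infty$ with finite-trace support, interpolation then reduces to $p=2$, and there the Hilbert--Schmidt property of $A_\omega^{*}\chi_e$ is exactly what upgrades strong convergence to $\|\cdot\|_2$-convergence via dominated convergence. The estimates in your two reduction steps are correct.

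More to the point, you have identified a genuine issue with the \emph{statement}: the lemma as literally written (product $AB_n$, with only strong convergence of $B_n$) is false, and your shift counterexample shows it. What is actually needed is either strong-$*$ convergence of $B_n$, or else the product on the other side, $B_nA$. In every use of Lemma~\ref{lemlimstrong} in this paper --- $f_n(H)\to I$ (self-adjoint), $U(\El,t,s)\to U^{(0)}(t-s)$ and $\G(\El,s)\to I$ (unitary) --- strong and strong-$*$ convergence coincide, so nothing downstream is affected. But the hypothesis should be read as strong-$*$ convergence; your observation would make a useful corrective remark.
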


%%%%%%%%%%%%%%%%%%%%%%%%%%%%%%%%%%%%%%%%%%%%%%%%%%%%%%%
%%%%%%%%%%%%%%%%%%%%%%%%%%%%%%%%%%%%%%%%%%%%%%%%%%%%%%%

\subsection{Commutators  of measurable covariant operators}
Let $H$ be a decomposable (unbounded) operator affiliated to $\K_\infty$ with domain $\D$, and $A\in\M(\K_\infty)$. In particular $H$ need not be $\T$-measurable, i.e. in $\M(\K_\infty)$. If there exists a $\T$-dense domain $\D'$  such that $\ A\D'\subset \D$, then $HA$ is well defined, and if in addition the product is $\T$-measurable then we write $HA\in \M(\K_\infty)$. Similarly, if $\D$ is $\T$-dense and the range of $H\D\subset\D(A)$, then $AH$ is well defined, and if in addition the product is $\T$-measurable then we write $AH\in\M(\K_\infty)$.

\begin{remark}
We define the following (generalized)  commutators:
\begin{description}
\item[(i)]
If $ A\in\M\ (\K_{\infty})$ and $B\in\K_\infty$, 
\begin{equation}
[A,B]  :=  AB - BA \in\M(\K_\infty) , \quad [B,A]:=-[A,B].
\end{equation}
\item[(ii)]
If $A \in\mathrm{L}^p(\K_\infty)$, $B\in\mathrm{L}^q(\K_\infty) $, $p,q\ge 1$ such that $p^{-1} + q^{-1}=1$ , then
\begin{equation}
[A,B]  :=
 A B - B A \in\mathrm{L}^1(\K_\infty).
\end{equation}
\end {description}
\end{remark}

\begin{definition}\label{defCommutator}
Let $H \eta\K_\infty$(i.e $H$ affiliated to $\K_\infty$). If $A\in\M(\K_\infty)$ is such that $HA$ and $AH$ are  in $\M(\K_\infty)$, then
\begin{equation}\label{defcomHA}
[H,A] := H A - AH \in\M(\K_\infty) \, .
\end{equation}

\end{definition}

We shall need the following observations.

\begin{lemma}\label{formulaCommutator}
1)  For any $A\in \mathrm{L}^p(\K_\infty), B\in \mathrm{L}^q(\K_\infty)$, $p,q\ge 1$,  $p^{-1}+q^{-1}=1$, and $C_\omega \in \K_\infty$, we have
\begin{equation}
\T \left\{  [C, A] B\right\}=
\T \left\{ C [A, B]\right\} .
\end{equation}
2) For any $A, B\in
\K_\infty$ and $ C\in\mathrm{L}^1(\K_\infty)$, we have
\begin{equation}
\T \left\{ A  [B, C] \right\} =
\T \left\{ [A, B]   C \right\} .
\end{equation}
3) Let $p,q\ge 1$ be such that $p^{-1}+q^{-1}=1$. For any $A\in \mathrm{L}^p(\K_\infty)$, resp. $B\in \mathrm{L}^q(\K_\infty)$, such that $[H,A]\in \mathrm{L}^p(\K_\infty)$, resp. $[H,B]\in \mathrm{L}^q(\K_\infty)$, we have
\begin{equation}\label{LLcycle}
\T \left\{[H,A]  {B}\right\} =
 - \T\left\{A  [H,{B}]\right\}.
\end{equation}

\end{lemma}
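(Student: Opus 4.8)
\emph{Proof proposal.} The plan is to prove 1) and 2) by a direct computation — expanding the commutators, using the H\"older inequality \eqref{Holder} to place every product in $\mathrm L^1(\K_\infty)$, and using the cyclicity Lemma~\ref{lemcyclicity} to rearrange the resulting traces — and then to reduce 3) to these cases by truncating $H$ to bounded operators.

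For 1), by \eqref{Holder} the operators $CA$ and $AC$ lie in $\mathrm L^p(\K_\infty)$ while $AB$ and $BA$ lie in $\mathrm L^1(\K_\infty)$, so every triple product occurring below has a well-defined trace. Expanding, $\T\{[C,A]B\}=\T\{CAB\}-\T\{ACB\}$ and $\T\{C[A,B]\}=\T\{CAB\}-\T\{CBA\}$, so the identity reduces to $\T\{ACB\}=\T\{CBA\}$; this follows from two applications of Lemma~\ref{lemcyclicity} (first move $B$ to the front to get $\T\{ACB\}=\T\{BAC\}$, then regroup $BA\in\mathrm L^1(\K_\infty)$ against $C\in\K_\infty$ to get $\T\{BAC\}=\T\{CBA\}$). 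Part 2) is handled in exactly the same way, with the roles of the bounded factor and the $\mathrm L^1$ factor interchanged; one is again left with an instance of Lemma~\ref{lemcyclicity}.

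For 3) the delicate point is that $H$ is only affiliated to $\K_\infty$, so that the products $HAB$, $AHB$, $ABH$ need not individually belong to $\mathrm L^1(\K_\infty)$ and Lemma~\ref{lemcyclicity} is not directly available for $H$. I would first dispose of the bounded case: if $\widetilde H\in\K_\infty$ then $\widetilde HA,A\widetilde H\in\mathrm L^p(\K_\infty)$ by \eqref{Holder}, and
\[
\T\{[\widetilde H,A]B\}=\T\{\widetilde HAB\}-\T\{A\widetilde HB\}=\T\{AB\widetilde H\}-\T\{A\widetilde HB\}=-\T\{A[\widetilde H,B]\},
\]
the middle equality being Lemma~\ref{lemcyclicity} applied to $\widetilde H\in\K_\infty$ and $AB\in\mathrm L^1(\K_\infty)$. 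I would then apply this to the spectral truncations $H_n:=H\,\chi_{[-n,n]}(H)$, which are bounded and covariant, hence elements of $\K_\infty$, and pass to the limit $n\to\infty$ using that $\chi_{[-n,n]}(H)\to 1$ strongly; continuity of the bilinear form $(X,Y)\mapsto\T(XY)$ on $\mathrm L^p(\K_\infty)\times\mathrm L^q(\K_\infty)$ then yields \eqref{LLcycle}.

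The main obstacle I anticipate is justifying that $[H_n,A]\to[H,A]$ in $\mathrm L^p(\K_\infty)$ and $[H_n,B]\to[H,B]$ in $\mathrm L^q(\K_\infty)$: one has $[H_n,A]-[H,A]=(\chi_{[-n,n]}(H)-1)HA+AH(1-\chi_{[-n,n]}(H))$, in which the two summands need not separately lie in $\mathrm L^p(\K_\infty)$ (only the difference $HA-AH$ is assumed to), so the convergence has to be extracted by rewriting this expression in terms of $[H,A]$ and of commutators with the bounded projection $\chi_{[-n,n]}(H)$, and then invoking Lemma~\ref{lemlimstrong} together with a noncommutative dominated-convergence argument. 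Once this convergence is secured, 3) follows, and everything else is routine bookkeeping with \eqref{Holder} and Lemma~\ref{lemcyclicity}.
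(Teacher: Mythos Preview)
The paper states this lemma without proof (details are deferred to \cite{Do}), so there is no argument in the text to compare against; I can only assess your proposal on its own merits.

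Your treatment of 1) and 2) is correct and is the expected one: everything reduces to repeated applications of Lemma~\ref{lemcyclicity} once H\"older places each triple product in $\mathrm L^1(\K_\infty)$.

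For 3), your truncation strategy via $H_n=H\chi_{[-n,n]}(H)\in\K_\infty$ is the natural one, and you have correctly located the only nontrivial point. Two comments. First, the obstacle you flag largely evaporates under the reading that is actually used everywhere in the paper (the uniqueness argument in Theorem~\ref{thmrho}, the manipulations in Theorem~\ref{thmHall}): there one has $A\in\D_p^{(0)}$ and $B\in\D_q^{(0)}$, i.e.\ $HA,AH\in\mathrm L^p(\K_\infty)$ and $HB,BH\in\mathrm L^q(\K_\infty)$ separately. In that case $(P_n-1)HA\to 0$ and $AH(1-P_n)\to 0$ in $\mathrm L^p(\K_\infty)$ directly from Lemma~\ref{lemlimstrong} (using that $P_n=\chi_{[-n,n]}(H)$ is self-adjoint to pass to left multiplication via adjoints), so $[H_n,A]\to[H,A]$ in $\mathrm L^p(\K_\infty)$ with no further device, and similarly for $B$; your limiting argument then goes through cleanly.

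Second, if one insists on the literal hypothesis that only $HA-AH\in\mathrm L^p(\K_\infty)$ with $HA,AH$ merely $\T$-measurable, then the fix you sketch does not close the gap. Your rewriting gives
\[
[H_n,A]-[H,A]=(P_n-1)[H,A]+[P_n,AH],
\]
and while the first term tends to $0$ in $\mathrm L^p(\K_\infty)$, there is no reason for $[P_n,AH]\to 0$ in $\mathrm L^p(\K_\infty)$ when $AH$ is only in $\M(\K_\infty)$; strong convergence $P_n\to 1$ does not force commutators with arbitrary $\T$-measurable operators to vanish in $\mathrm L^p$, and no ``noncommutative dominated convergence'' at that level of generality is available here. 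Under this weak reading the clean route is to strengthen the hypothesis to $A\in\D_p^{(0)}$, $B\in\D_q^{(0)}$, which is what the applications require anyway.
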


\subsection{Differentiation}

A $\ast$-derivation $\partial$ is a $\ast$-map defined  on a dense sub-algreba of $\K_\infty$ and such that:
\begin{itemize}
\item $\partial(AB)=\partial(A)B +A\partial(B)$\
\item $\partial(A+\lambda B)=\partial(A) +\lambda\partial(B)$\
\item $\partial(A^{\star})=\partial(A)^{\star}$
\item $\ [\alpha_{a},\partial]=0$ in the sense that $\alpha_{a}\circ\partial(A)=\partial\circ\alpha_{a}(A)\
\forall a\in\mathcal{Z}\ , \forall A\in\mathcal{K}_{\infty}$.
\end{itemize}

If $\partial_{1},...,\partial_{d}$ are $\ast$-derivations we define a non-commutative gradient by $\nabla:=(\partial_{1},...,\partial_{d})$, densily defined on $\mathcal{K}_{\infty}$. We define  a  non-commutative Sobolev space 
\begin{equation}
\W^{1,p}(\K_\infty):=\lbrace A\in \mathrm{L}^p(\K_\infty) , \,\nabla A\in\mathrm{L}^p(\K_\infty)\rbrace.
\end{equation}
 and a second space for $ H\eta\K_{\infty}$,
\begin{equation}
\D^{(0)}_{p}(H)=
\left\{A\in \mathrm{L}^p(\K_\infty), \;\;
H A, AH \in \mathrm{L}^p(\K_\infty)\right \}.
\end{equation}

\section{The setting: Schr\"odinger operators and dynamics}
\label{sectoperator} \setcounter{equation}{0}

In this section we describe our background operators and recall from \cite{BGKS} the main properties we shall need in order to establish the Kubo formula, but within the framework of noncommutative integration when relevant (i.e. in Subsection~\ref{subsectMSOrandom}).

\subsection{Magnetic Schr\"odinger operators and time-dependent operators}
\label{subsectMSO}
~\\
Throughout this paper we shall consider Schr\"odinger operators of general form
\begin{eqnarray}  \label{MSO}
H= H(\A,V) = \left(-i\nabla - \A\right)^2 + V  \; \; \; \mathrm{on} \; \; \;
\mathrm{L}^2(\mathbb{R}^d),
\end{eqnarray}
where the magnetic potential $\A$ and the electric potential $V$
 satisfy the Leinfelder-Simader conditions:
\begin{itemize}
\item   $\A(x) \in \mathrm{L}^4_{\mathrm{loc}}(\R^d; \R^d)$  with
$\nabla \cdot \A(x) \in \mathrm{L}^2_{\mathrm{loc}}(\R^d)$.

\item  $V(x)= V_+(x) - V_-(x)$ with
 $V_\pm(x) \in \mathrm{L}^2_{\mathrm{loc}}(\R^d)$, $V_\pm(x) \ge 0$,
 and
$ V_-(x)$  relatively bounded with respect to
$\Delta$ with relative bound $<1$, i.e., there are $0 \le\alpha  < 1$ and $\beta \ge 0$
such that
\begin{equation}\label{relbound}
 \|V_-\psi\| \leq
\alpha \| \Delta \psi \| + \beta \|\psi \| \quad \mbox{for all $\psi \in\D(\Delta)$}.
\end{equation}
\end{itemize}
  Leinfelder and Simader have shown that $ H(\A,V)$
 is essentially self-adjoint on $\Cc^\infty(\R^d)$
 \cite[Theorem 3]{LS}. It has been checked in \cite{BGKS} that under these hypotheses $H({\A},V)$ is bounded from below:
\begin{equation} \label{lowerbound}
H({\A},V) \ge
 -\, \frac \beta {(1 - {\alpha})} =: -\gamma +1,  \mbox{ so that } H + \gamma \ge 1.
\end{equation}

We denote by $x_j$ the multiplication operator in $\mathrm{L}^2(\R^d)$ by the $j^{\rm th}$ coordinate $x_j$, and $\x:=(x_1,\cdots x_d)$. 
We want to implement the adiabatic switching of a time dependent spatially uniform electric field $ \El_\eta(t)\cdot\x=\e^{\eta t} \El(t)\cdot \x$ between time $t=-\infty$ and time $t=t_0$. Here $\eta>0$ is the adiabatic parameter and we assume that
\begin{equation}
\int_{-\infty}^{t_0} \e^{\eta t} |\El(t)| \mathrm{d}t < \infty.
\end{equation}
To do so we consider the time-dependent magnetic potential $\A(t)=\A + \F_\eta(t)$, with $\F_\eta(t) = \int_{-\infty}^t \El_\eta(s)\di s$. In other terms, the dynamics is generated by the time-dependent magnetic operator
\begin{eqnarray}\label{eq:Htilde}
    H(t) = ({-i\nabla} - \A - \F_\eta(t))^2  + V(x)=  H(\A(t),V)  \, ,
\end{eqnarray}
which is essentially self-adjoint on
 $\Cc^\infty(\R^d)$ with domain $\D :=  \D(H)=  \D(H(t))$ for all $t \in \R$. One has (see \cite[Proposition~2.5]{BGKS})
\begin{equation}
H(t) = H - 2 \F_\eta(t)\cdot{\Db}(\A) + \F_\eta(t)^2 \;\;\mbox{on  $\D(H)$} ,\label{HtH}
\end{equation}
where $\Db=\Db(\A)$ is the closure of $ (-i\nabla - \A)$ as an
operator from $\mathrm{L}^2(\R^d)$ to $\mathrm{L}^2(\R^d; \C^d)$
with domain $\Cc^\infty(\R^d)$. Each of its components $\Db_j=
\Db_j(\A) =  (-i \frac \partial {\partial x_j} - \A_j)$,
$j=1,\ldots,d$, is essentially self-adjoint on $\Cc^\infty(\R^d)$.

To see that such a family of operators generates the dynamics a quantum particle in the presence of the time dependent spatially uniform electric field $ \El_\eta(t)\cdot\x$, consider the gauge transformation
\begin{equation}\label{gaugedef}
[G(t)\psi](x) := \e^{i \F_\eta(t) \cdot x} \psi(x) \; ,
\end{equation}
so that
\begin{equation}\label{gaugeC}
    H(t) \ = \ G(t) \left [ (-i\nabla - \A)^2 + V \right ] G(t)^*\; .
\end{equation}
Then if $\psi(t)$ obeys Schr\"odinger equation
\begin{equation}\label{schreq1}
i \partial_t \psi(t) = H(t)
\psi(t),
\end{equation}
one has, \emph{formally},
\begin{equation}
 i \partial_t G(t)^\ast \psi(t) = \left [ (-i \nabla - \A)^2 + V +  \El_\eta(t)\cdot x \right ] G(t)^\ast \psi(t) .
\end{equation}

To summarize the action of the gauge transformation we recall the
\begin{lemma}\cite[Lemma~2.6]{BGKS}
\label{lemDA} Let $G(t)$ be as  in \eqref{gaugedef}.  Then
\begin{eqnarray}
G(t)\D&=&\D \, ,\\
\label{gaugeC1}
    H(t) &=&  G(t) H G(t)^*\, ,\\
\Db(\A + \F_\eta(t)) &=&  \Db(\A) - \F_\eta(t)= G(t) \Db(\A) G(t)^* \label{defDAt}.
\end{eqnarray}
Moreover, $i[H(t),x_j]=2 \Db(\A + \F_\eta(t))$ as quadratic forms on
$\D\cap\D(x_j)$, $j=1,2, \dots,d$.
\end{lemma}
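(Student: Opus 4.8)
The plan is to establish each identity in Lemma~\ref{lemDA} by transporting the already-known facts about the static operator $H=H(\A,V)$ through the unitary gauge transformation $G(t)$, whose multiplier $\e^{i\F_\eta(t)\cdot x}$ is smooth with bounded derivatives in $x$. First I would record that multiplication by $\e^{i\F_\eta(t)\cdot x}$ maps $\Cc^\infty(\R^d)$ onto itself, and that conjugating a differential operator by it is the familiar momentum shift: on $\Cc^\infty(\R^d)$ one computes $G(t)^*(-i\partial_{x_j})G(t)=-i\partial_{x_j}+\F_\eta(t)_j$, hence $G(t)^*(-i\nabla-\A)G(t)=(-i\nabla-\A)+\F_\eta(t)=(-i\nabla-(\A-\F_\eta(t)))$; taking closures gives \eqref{defDAt}, i.e.\ $\Db(\A+\F_\eta(t))=\Db(\A)-\F_\eta(t)=G(t)\Db(\A)G(t)^*$ (the sign being just a matter of how $\F_\eta$ enters $\A(t)=\A+\F_\eta(t)$). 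Then $G(t)HG(t)^*=(G(t)(-i\nabla-\A)G(t)^*)^2+V$ on the core $\Cc^\infty(\R^d)$ equals $H(t)$, and since both sides are essentially self-adjoint on that core (the left by unitary conjugation of an essentially self-adjoint operator, the right by Leinfelder--Simader), their closures agree, which is \eqref{gaugeC1}; in particular $G(t)\D(H)=\D(H(t))=\D$, the first displayed line, because unitaries carry domains of self-adjoint operators to domains of the conjugated operators and we already know $\D(H(t))=\D(H)=\D$ for all $t$.

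For the commutator statement $i[H(t),x_j]=2\Db_j(\A+\F_\eta(t))$ as quadratic forms on $\D\cap\D(x_j)$, I would argue first on the dense core $\Cc^\infty(\R^d)\cap\D(x_j)$ by a direct computation: writing $H(t)=\sum_k \Db_k(t)^2+V$ with $\Db_k(t)=\Db_k(\A+\F_\eta(t))$, one has $[\Db_k(t)^2,x_j]=\Db_k(t)[\Db_k(t),x_j]+[\Db_k(t),x_j]\Db_k(t)$, and $[\Db_k(t),x_j]=[-i\partial_{x_k},x_j]=-i\delta_{kj}$ since the potential parts $\A_k+\F_\eta(t)_k$ are multiplication operators commuting with $x_j$; the potential $V$ also commutes with $x_j$, so $[H(t),x_j]=-2i\Db_j(t)$ on $\Cc^\infty(\R^d)$, i.e.\ $i[H(t),x_j]=2\Db_j(t)$. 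Then I would extend this quadratic-form identity from the core to all of $\D\cap\D(x_j)$ by the usual density/closability argument, using that $H(t)$ is essentially self-adjoint on $\Cc^\infty(\R^d)$, that $x_j$ is essentially self-adjoint there, and that $\Db_j(t)$ is a closed operator with $\Cc^\infty(\R^d)$ a core; this is exactly the kind of form-approximation the reference \cite{BGKS} carries out, so I would either cite it or sketch it.

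The main obstacle is not any single computation — each is elementary on $\Cc^\infty(\R^d)$ — but the bookkeeping of cores and closures needed to promote the formal identities to honest operator or quadratic-form identities on the stated domains. Concretely, the delicate point is justifying the form identity $i[H(t),x_j]=2\Db_j(\A+\F_\eta(t))$ on the full space $\D\cap\D(x_j)$ rather than just on the smooth core: one must check that, for $\psi,\varphi\in\D\cap\D(x_j)$, approximating sequences in $\Cc^\infty(\R^d)$ can be chosen converging simultaneously in the graph norms of $H(t)$ and of $x_j$, so that both $\la H(t)\psi, x_j\varphi\ra-\la x_j\psi,H(t)\varphi\ra$ and $\la\psi,2\Db_j(t)\varphi\ra$ pass to the limit; this uses the Leinfelder--Simader essential self-adjointness together with the relative boundedness \eqref{relbound} to control the cross terms. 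Everything else — the momentum-shift computation, \eqref{gaugeC1}, and $G(t)\D=\D$ — follows cleanly once the conjugation identity for $\Db$ on $\Cc^\infty(\R^d)$ and the passage to closures are in hand, and I would simply refer to \cite[Proposition~2.5, Lemma~2.6]{BGKS} for the technical details while indicating the computations above.
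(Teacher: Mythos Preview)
The paper does not actually prove this lemma; it is quoted verbatim from \cite[Lemma~2.6]{BGKS} and simply recalled without argument. Your proposal supplies the standard proof---the momentum-shift computation on $\Cc^\infty(\R^d)$, passage to closures via the Leinfelder--Simader essential self-adjointness, and the direct commutator calculation $[\Db_k(t)^2,x_j]=-2i\delta_{kj}\Db_j(t)$---and this is precisely the argument one would expect \cite{BGKS} to contain, so your approach is correct and there is nothing to compare against in the present paper.

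One small presentational remark: you compute $G(t)^*(-i\nabla-\A)G(t)$ but the lemma is stated for the opposite conjugation $G(t)\Db(\A)G(t)^*$; you acknowledge this as a sign convention, which is fine, but it would be cleaner to compute directly $G(t)(-i\partial_{x_j})G(t)^*=-i\partial_{x_j}-\F_\eta(t)_j$ so that $G(t)\Db(\A)G(t)^*=\Db(\A)-\F_\eta(t)=\Db(\A+\F_\eta(t))$ drops out without any reversal.
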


The key observation is that the general theory
of propagators with a time dependent generator \cite[Theorem
XIV.3.1]{Y} applies to $H(t)$. It thus yields the existence of a two parameters family $U(t,s)$ of unitary operators,  jointly
strongly continuous in $t$ and $s$, that solves the Schr\"odinger equation.
\begin{eqnarray}  \label{U1}
U(t,r) U(r,s)&= &U(t,s) \\
U(t,t)&= & I\\
\label{Dinv}
U(t,s) \D &=& \D \, ,\\
i \partial_t U(t,s) \psi &=&  H(t) U(t,s) \psi \;\; \mbox{for all $\psi \in \D$}\, ,  \label{leftdiff}\\
\label{reverseddif}
 i \partial_s U(t,s) \psi &=& -\, U(t,s) H(s) \psi \;\; \mbox{for all $\psi \in \D$}\, .
\end{eqnarray}
We refer to \cite[Theorem~2.7]{BGKS} for other relevant properties.

To compute the linear response, we shall make use of the following
``Duhamel formula". Let $ U^{(0)}(t) = \mathrm{e}^{-itH} $. For all  $ \psi \in \D $
and  $ t,s \in \R $ we have \cite[Lemma~2.8]{BGKS}
\begin{equation}\label{DuhamelU}
U(t,s) \psi = U^{(0)}(t-s)\psi  + i \int_{s}^t U^{(0)}(t-r) (2  \F_\eta(r)\cdot
\Db(\A)-\F_\eta(r)^2)U(r,s) \psi \ \mathrm{d}r  \, .
\end{equation}
Moreover,
\begin{equation}\label{DuhamelU2}
 \lim_{|\El| \rightarrow 0} U(t,s) = U^{(0)}(t-s) \;\;\mbox{strongly}\, .
\end{equation}

%%%%%%%%%%%%%%%%%%%%%%%%%%%%%%%%%%%%%%%%%%%%%%%%%%%%%%%%%%%%%%%%%%%%%%%%%%%%%%
 %%%%%%%%%%%%%%%%%%%%%%%%%%%%%%%%%%%%%%%%%%%%%%%%%%%%%%%%%%%%%%%%
%%%%%%%%%%%%%%%%%%%%%%%%%%%%%%%%%%%%%%%%%%%%%%%%%%%%%%%%%%%%%%%%%
%%%%%%%%%%%%%%%%%%%%%%%%%%%%%%%%%%%%%%%%%%%%%%%%%%%%%%%%%%%%%%%%%

\subsection{Adding the randomness}
\label{subsectMSOrandom}

Let $(\Omega, \P)$ be a probability space equipped with an ergodic
group $\{\tau_a; \ a \in \Z^d\}$ of measure preserving
transformations.  We study operator--valued maps  $A \colon \Omega \ni\omega\mapsto A_\omega$.

Throughout the rest of this paper we shall use the material of Section~\ref{sectnci} with $\H=\mathrm{L}^2(\R^d)$ and $\mathcal{Z}=\Z^d$. The projective representation of $\Z^d$ on $\H$ is given by magnetic translations $(U(a)\psi)(x) = \e^{ia\cdot Sx}\psi(x-a)$, $S$ being a given $d \times d$ real matrix. The projection $\chi_a$ is the characteristic function of the unit cube of $\R^d$ centered at $a\in\Z^d$.

In our context natural $\ast$-derivations arise: 
\begin{equation}
\partial_j A := i[x_j, A], \; j=1,\cdots,d, \quad \nabla A=i[\x,A].
\end{equation}

We shall now recall the material from \cite[Section~4.3]{BGKS}. Proofs of assertions are extensions of the arguments of \cite{BGKS} to the setting of $\mathrm{L}^p(\K_\infty)$-spaces. We refer to \cite{Do} for details.

We state the technical assumptions on our  Hamiltonian of reference $H_\omega$.
\begin{assumption}\label{RandH}
  The ergodic Hamiltonian $\omega \mapsto H_\omega$ is a measurable map
 from the
  probability space $(\Omega, \P)$ to  self-adjoint operators on $\H$ such
  that
  \begin{equation}
    H_\omega= H(\A_\omega,V_\omega) = \left(-i\nabla - \A_\omega\right)^2 +
    V_\omega \; ,
  \end{equation}
  almost surely, where $\A_\omega$ ($V_\omega$) are vector (scalar) potential
  valued random variables which satisfy the Leinfelder-Simader conditions
(see Subsection~\ref{subsectMSO})
  almost surely. It is furthermore assumed that $H_\omega$ is covariant as in \eqref{cov}. We denote by $H$ the operator $(H_\omega)_{\omega\in\Omega}$ acting on $\tilde{\H}$.
\end{assumption}

As a consequence $\|f(H_\omega)\|\le \|f\|_\infty$ and  $f(H)\in \K_\infty$ for every bounded Borel function $f$ on
the real line. In particular $H$ is affiliated to $\K_\infty$.
For $\P$-a.e. $\omega$, let $U_\omega(t,s)$ be the  unitary propagator associated to $H_\omega$ and described in Subsection~\ref{subsectMSO}. Note that $(U_\omega(t,s))_{\omega\in\Omega} \in \K_\infty$ (measurability in $\omega$ follows by construction of $U_\omega(t,s)$, see \cite{BGKS}).
For $A \in \M(\K_{\infty})$ decomposable, let
\begin{eqnarray}  \label{UtsUst}
{\U}(t,s) (A) :=\int_{\Omega}^{\oplus} U_\omega(t,s)A_\omega U_\omega(s,t) \, \mathrm{d}\P(\omega) .
\end{eqnarray}
Then ${\U}(t,s)$ extends a linear operator on $\M(\K_{\infty})$, leaving invariant $\M(\K_{\infty})$ and $\mathrm{L}^p(\K_\infty)$, $p\in]0,\infty]$, with
\begin{eqnarray}
{\U}(t,r){\U}(r,s) &=& {\U}(t,s) \, ,\\
{\U}(t,t) &=&  I \,,\\
\left\{{\U}(t,s) (A)\right\}^\ast&=&
 {\U}(t,s) (A^\ast)\, \label{UAUdag} .
\end{eqnarray}
Moreover, ${\U}(t,s)$ is to a unitary  on $\mathrm{L}^2(\K_\infty)$ and an isometry in
 $\mathrm{L}^p(\K_\infty)$, $p\in [1,\infty]$. In addition,
 ${\U}(t,s)$ is jointly strongly continuous
in $t$ and $s$ on $\mathrm{L}^p(\K_\infty)$, $p\in 1,\infty]$.

Pick $p>0$. Let
 $A\in \mathrm{L}^p(\K_\infty)$ be such that
  $H(r_0)A$ and
$AH(r_0)$ are  in $\mathrm{L}^p(\K_\infty)$ for some
 $r_0 \in [-\infty, \infty)$.
Then both maps $r\mapsto {\U}(t,r)(A)\in \mathrm{L}^p(\K_\infty)$ and $t\mapsto {\U}(t,r)(A)\in \mathrm{L}^p(\K_\infty)$ are differentiable in
 $\mathrm{L}^p(\K_\infty)$, with (recalling Definition~\ref{defCommutator})
\begin{eqnarray}
i\partial_r\,  {\U}(t,r)(A) & = &
-{\U}(t,r)([H(r),A]) . \label{eqderivtau}
\\
 i\partial_t\,  {\U}(t,r)(A)
& = &
[H(t),{\U}(t,r)(A)]  . \label{eqderivU3}
\end{eqnarray}
Moreover, for $t_0<\infty$ given, there exists $C=C(t_0)<\infty$ such that for all $t,r\le t_0$,
\begin{gather}\label{HUAW}
\|\left(H(t)+ \gamma\right){\U}(t,r)(A)\|_p \le
C
\| (H(r)+\gamma) A\|_p \, ,\\
\label{HUAWbound}
\left\|[H(t),{\U}(t,r)(A)]\right\|_p \le
  C \left( \| (H(r)+\gamma) A\|_p
+
 \| A (H(r)+\gamma)\|_p
\right).
\end{gather}
We  note that in order to apply the above formula and in particular \eqref{eqderivtau} and \eqref{eqderivU3}, it is actually enough to check that $(H(r_0)+\gamma)A$ and $A(H(r_0)+\gamma)$ are  in $\mathrm{L}^p(\K_\infty)$.

Whenever we  want to keep track of the dependence of  $U_\omega(t,s)$
on the electric field $\El=\El_\eta(t)$,  we shall write $U_\omega(\El, t,s)$.  
When  $\El=0$, note that
\begin{equation} \label{U0}
U_\omega(\El= 0, t,s)  = U^{(0)}_\omega(t-s) := \e^{-i(t-s) H_\omega}  .
\end{equation}
For $A \in \M(\K_{\infty})$ decomposable, we let
\begin{equation}  \label{U00}
\U^{(0)}(r)(A):=\int_{\Omega}^{\oplus} U^{(0)}_\omega(r)  A_\omega U^{(0)}_\omega(-r) \, \mathrm{d}\P (\omega) .
\end{equation}
We still denote by $\U^{(0)}(r)(A)$ its extension to $\M(\K_{\infty})$.

\begin{proposition}\label{Liouvillian}
Let $p\ge 1$ be given. ${\U}^{(0)}(t)$ is a one-parameter group of operators
on $\M(\K_{\infty})$, leaving   $\mathrm{L}^p(\K_\infty)$   invariant.
${\U}^{(0)}(t)$ is an isometry on $\mathrm{L}^p(\K_\infty)$, and unitary if $p=2$.
It is strongly continuous on  $\mathrm{L}^p(\K_\infty)$. We further denote by $\Ll_{p}$ the  infinitesimal generator of ${\U}^{(0)}(t)$ in $\mathrm{L}^p(\K_\infty)$:
\begin{equation}
\U^{(0)}(t) = \e^{-it \Ll_{p}} \quad \mbox{for all $t \in \R$}\, .
\end{equation}
The operator $\mathcal{L}_p$ is usually called the \emph{Liouvillian}.
Let
\begin{equation}\label{domainL0}
\D^{(0)}_{p}=
\left\{A\in \mathrm{L}^p(\K_\infty), \;\;
H A, AH \in \mathrm{L}^p(\K_\infty)\right \}.
\end{equation}
Then $\D^{(0)}_{p}$ is an operator core for $\Ll_{p}$ (note that
$\Ll_{2}$ is essentially self-adjoint on
$\D^{(0)}_{2}$), and
\begin{equation}
\Ll_{p} (A) = [ H, A]  \quad
\mbox{for all $A \in \D^{(0)}_{p}$} .
\end{equation}
Moreover, for every $B_\omega \in \K_\infty $ there exists a sequence
$B_{n,\omega} \in\D^{(0)}_{\infty} $ such that $B_{n,\omega} \to B_\omega$
as a  bounded and $\P$-a.e.-strong limit.
\end{proposition}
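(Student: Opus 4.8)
\emph{Proof sketch.} The plan is to obtain the structural assertions by specializing to the field-free case $\El=0$ the properties of the propagators $\U(t,s)$ recalled in Subsection~\ref{subsectMSOrandom}. When $\El=0$ one has $\F_\eta(t)\equiv 0$, hence $H(t)\equiv H$ and $U_\omega(t,s)=\e^{-i(t-s)H_\omega}=U^{(0)}_\omega(t-s)$, so that $\U(t,s)=\U^{(0)}(t-s)$ on $\M(\K_\infty)$. Consequently $\U^{(0)}(t)$ leaves $\M(\K_\infty)$ and each $\mathrm{L}^p(\K_\infty)$ invariant, is an isometry there (with $(\U^{(0)}(t)(A))^\ast=\U^{(0)}(t)(A^\ast)$ by \eqref{UAUdag}), is unitary on $\mathrm{L}^2(\K_\infty)$, and is strongly continuous on $\mathrm{L}^p(\K_\infty)$ for $p\in[1,\infty[$ --- each of these being the $\El=0$ instance of a property stated for $\U(t,s)$. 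The one-parameter group law $\U^{(0)}(t)\U^{(0)}(s)=\U^{(0)}(t+s)$ follows from $U^{(0)}_\omega(t)U^{(0)}_\omega(s)=U^{(0)}_\omega(t+s)$ (equivalently, from the cocycle relation $\U(t+s,s)\U(s,0)=\U(t+s,0)$).

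Since $(\U^{(0)}(t))_{t\in\R}$ is then a strongly continuous one-parameter group of isometries of the Banach space $\mathrm{L}^p(\K_\infty)$, $1\le p<\infty$, standard semigroup theory (Hille--Yosida applied to the two semigroups $t\ge 0$ and $t\le 0$) provides a closed, densely defined generator $\Ll_p$ with $\U^{(0)}(t)=\e^{-it\Ll_p}$; for $p=2$ the group is unitary, so Stone's theorem yields $\Ll_2=\Ll_2^\ast$. To identify $\Ll_p$ on $\D^{(0)}_p$ I would use the differentiation formula \eqref{eqderivU3} in the case $\El=0$, $r=0$: for $A\in\D^{(0)}_p$ the map $t\mapsto\U^{(0)}(t)(A)$ is differentiable in $\mathrm{L}^p(\K_\infty)$ with $i\partial_t\U^{(0)}(t)(A)=[H,\U^{(0)}(t)(A)]$, and evaluating at $t=0$ shows $A\in D(\Ll_p)$ with $\Ll_p(A)=[H,A]$.

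The substantive step is that $\D^{(0)}_p$ is an operator core for $\Ll_p$. By the standard criterion it suffices that $\D^{(0)}_p$ be dense in $\mathrm{L}^p(\K_\infty)$ and invariant under $\U^{(0)}(t)$. For invariance, let $A\in\D^{(0)}_p$; since $H_\omega$ commutes with $\e^{-itH_\omega}$ on $\D$ one has, in $\M(\K_\infty)$, $H\,\U^{(0)}(t)(A)=\U^{(0)}(t)(HA)$ and $\U^{(0)}(t)(A)\,H=\U^{(0)}(t)(AH)$, both of which lie in $\mathrm{L}^p(\K_\infty)$ because $\U^{(0)}(t)$ preserves $\mathrm{L}^p(\K_\infty)$ (the a priori bound \eqref{HUAW} with $\El=0$ being available); hence $\U^{(0)}(t)(A)\in\D^{(0)}_p$. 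For density, given $A\in\mathrm{L}^{p,o}(\K_\infty)$ (dense in $\mathrm{L}^p(\K_\infty)$), put $A_n:=e_nAe_n$ with $e_n:=\chi_{[-n,n]}(H)\in\K_\infty$; then $He_n$ is bounded, so $A_n\in\D^{(0)}_\infty\subseteq\D^{(0)}_p$, and from $A_n-A=e_n(Ae_n-A)+(e_n-1)A$ together with Lemma~\ref{lemlimstrong} (giving $Ae_n\to A$ in $\mathrm{L}^p(\K_\infty)$) and its adjoint form (giving $(e_n-1)A=(A^\ast(e_n-1))^\ast\to 0$, using that $\ast$ is an $\mathrm{L}^p$-isometry and $e_n\to 1$ strongly) one obtains $A_n\to A$ in $\mathrm{L}^p(\K_\infty)$. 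For $p=2$, a dense $\U^{(0)}$-invariant subspace of $D(\Ll_2)$ is automatically a core, so $\Ll_2$ is essentially self-adjoint on $\D^{(0)}_2$. Finally, for $B_\omega\in\K_\infty$ the sequence $B_{n,\omega}:=e_nB_\omega e_n$ lies in $\D^{(0)}_\infty$, satisfies $\|B_{n,\omega}\|\le\|B_\omega\|$, and $B_{n,\omega}\to B_\omega$ strongly for $\P$-a.e.\ $\omega$, as asserted.

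I expect the main obstacle to be precisely the core statement, i.e. carrying out the invariance and density arguments rigorously at the level of the (possibly unbounded) operators affiliated to $\K_\infty$ --- in particular justifying the identity $H\,\U^{(0)}(t)(A)=\U^{(0)}(t)(HA)$ in $\M(\K_\infty)$ and the membership $A_n\in\D^{(0)}_p$ --- but all of this is routine given the $\mathrm{L}^p(\K_\infty)$-calculus, the commutator formalism of Definition~\ref{defCommutator}, and the estimates recalled in Subsection~\ref{subsectMSOrandom}.
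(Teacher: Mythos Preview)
The paper does not give an explicit proof of this proposition: it is stated as recalled material, with the preceding sentence ``Proofs of assertions are extensions of the arguments of \cite{BGKS} to the setting of $\mathrm{L}^p(\K_\infty)$-spaces. We refer to \cite{Do} for details.'' Your sketch is precisely the expected argument in this framework and matches the structure of \cite{BGKS}: obtain the group/isometry/continuity properties of $\U^{(0)}(t)$ by specializing those already listed for $\U(t,s)$ to $\El=0$; invoke Hille--Yosida/Stone for the generator; identify $\Ll_p$ on $\D^{(0)}_p$ via \eqref{eqderivU3}; and prove the core statement from invariance (commutation of $H$ with $\e^{-itH}$) plus density via the spectral cutoffs $e_n=\chi_{[-n,n]}(H)$ together with Lemma~\ref{lemlimstrong}. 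The final approximation $B_{n,\omega}=e_nB_\omega e_n$ is likewise the standard device. There is nothing to correct; your proof fills in exactly what the paper omits.
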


 We finish this list of properties with the following lemma about the Gauge transformations in spaces of measurable operators.
The map
\begin{eqnarray}\label{Gtextends}
\G(t) (A) =  G(t) A G(t)^* \; ,
\end{eqnarray}
with $G(t)=\e^{i \int_{-\infty}^t \mathbf{E}_\eta(s)\cdot \x}$ as in \eqref{gaugedef}, is an
isometry on $\mathrm{L}^p(\K_\infty)$, for $p\in]0,\infty]$.

\begin{lemma} \label{lemGt}
For any $p\in]0,\infty]$, the map $\G(t)$ is strongly continuous on  $\mathrm{L}^p(\K_\infty)$, and
\begin{equation}\label{liminfty}
\lim_{t \to -\infty} \G(t) = I \;\;\mbox{strongly}
\end{equation}
on  $\mathrm{L}^p(\K_\infty)$. Moreover, if $A\in\W^{1,p}(\K_\infty)$, then $\G(t)
(A)$ is continuously differentiable in $\mathrm{L}^p(\K_\infty)$ with
\begin{equation}\label{eqderivK}
\partial_t \G(t)(A) ={\El}_\eta(t)\cdot\nabla(\G(t)(A)).
\end{equation}
\end{lemma}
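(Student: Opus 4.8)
The plan is to reduce all three assertions to elementary facts about the multiplication operators $G(t)=\e^{i\F_\eta(t)\cdot\x}$ on $\H=\L^2(\R^d)$ and to transfer them to $\L^p(\K_\infty)$ using that every $\G(t)$ is an $\L^p$-isometry and Lemma~\ref{lemlimstrong}. The only analytic input on the field is that $t\mapsto\F_\eta(t)=\int_{-\infty}^t\El_\eta(s)\,\di s$ is continuous, tends to $0$ as $t\to-\infty$ (tail of a convergent integral), and has derivative $\F_\eta'(t)=\El_\eta(t)$; from this and dominated convergence, $\e^{iw\cdot\x}\to I$ strongly on $\H$ as $w\to0$, and likewise $\e^{-iw\cdot\x}\to I$.

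For the strong continuity, write $G(t')=G(t)M$ with $M=\e^{i(\F_\eta(t')-\F_\eta(t))\cdot\x}$; since $\G(t)$ is an isometry, $\|\G(t')(A)-\G(t)(A)\|_p=\|MAM^\ast-A\|_p\le\|(M-I)A\|_p+\|A(M^\ast-I)\|_p$. As $t'\to t$ both $M,M^\ast\to I$ strongly on $\H$, so the second term tends to $0$ by Lemma~\ref{lemlimstrong}, and the first by applying that lemma to $A^\ast$ together with $\|X\|_p=\|X^\ast\|_p$; hence strong continuity. The same estimate with $M=G(t)$ as $t\to-\infty$ gives \eqref{liminfty}. (For $p=\infty$, where strong convergence need not be norm convergence, both statements are understood in the strong operator topology and are immediate from the corresponding facts for $G(t)$ on $\H$.)

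For \eqref{eqderivK} the formal computation is immediate: $\partial_tG(t)=i\El_\eta(t)\cdot\x\,G(t)$, $\x$ commutes with $G(t)$, and $[x_j,G(t)]=0$ makes $\G(t)$ commute with $\nabla$, so $\partial_t(G(t)AG(t)^\ast)=G(t)\,i[\El_\eta(t)\cdot\x,A]\,G(t)^\ast=\El_\eta(t)\cdot\nabla(\G(t)(A))$. To justify this in $\L^p(\K_\infty)$, I would first prove — for $A$ bounded with bounded gradient, then for all $A\in\W^{1,p}(\K_\infty)$ by density (both sides being $\L^p$-continuous in $A$) — the commutator--Duhamel identity
\begin{equation*}
M_hAM_h^\ast-A=\int_0^1\e^{isv_h\cdot\x}(v_h\cdot\nabla A)\e^{-isv_h\cdot\x}\,\di s,\qquad v_h:=\F_\eta(t+h)-\F_\eta(t),\ \ M_h:=\e^{iv_h\cdot\x},
\end{equation*}
obtained by integrating $\frac{\di}{\di s}(\e^{isX}A\e^{-isX})=\e^{isX}i[X,A]\e^{-isX}$ over $s\in[0,1]$ with $X=v_h\cdot\x$ (valid in operator norm when $A$ and $[\x,A]$ are bounded, and $i[X,A]=v_h\cdot\nabla A$). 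Dividing by $h$ and using that $\G(t)$ and the conjugations $B\mapsto\e^{isv_h\cdot\x}B\e^{-isv_h\cdot\x}$ are $\L^p$-isometries, it remains to see that $\int_0^1\e^{isv_h\cdot\x}(\tfrac{v_h}{h}\cdot\nabla A)\e^{-isv_h\cdot\x}\,\di s\to\El_\eta(t)\cdot\nabla A$ in $\L^p(\K_\infty)$ as $h\to0$; since $\tfrac{v_h}{h}\to\El_\eta(t)$ and $\e^{isv_h\cdot\x}\to I$ strongly, the integrand converges pointwise in $s$ (by the argument of the previous paragraph) and is bounded in $\L^p$-norm by $(\sup_h|\tfrac{v_h}{h}|+|\El_\eta(t)|)\|\nabla A\|_p$, so dominated convergence for the Bochner integral applies. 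Continuity of $t\mapsto\El_\eta(t)\cdot\nabla(\G(t)(A))$ then follows from the strong continuity just proved, $\nabla A\in\L^p(\K_\infty)$, and continuity of $\El_\eta$.

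The delicate point is the differentiability: the raw difference quotient produces, in the limit, the unbounded multiplication operator $\x$, so it does not converge at the level of $\K_\infty$ and Lemma~\ref{lemlimstrong} cannot be applied directly. The commutator--Duhamel identity is precisely the device that keeps the whole argument inside $\L^p(\K_\infty)$, since only the a priori $\L^p$-bounded quantity $\nabla A$ ever enters; its cost is the density of bounded-gradient elements in $\W^{1,p}(\K_\infty)$, which I would import from the noncommutative integration theory recalled in Section~\ref{sectnci} (details in \cite{Do}).
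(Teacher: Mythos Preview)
The paper does not prove Lemma~\ref{lemGt}: it is stated among the material imported from \cite[Section~4.3]{BGKS}, with the announcement that proofs are ``extensions of the arguments of \cite{BGKS}'' and that details are deferred to \cite{Do}. So there is no in-paper argument to compare against; I can only assess your proposal on its own terms.

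Your overall plan is the natural one, and the commutator--Duhamel identity you write down is exactly the right device for the differentiability statement---it keeps the computation inside $\L^p(\K_\infty)$ by trading the unbounded $\x$ for the a priori $\L^p$-bounded $\nabla A$. There is, however, a genuine gap in the continuity step, and it propagates to the place where you reuse ``the argument of the previous paragraph'' for the Duhamel integrand. You apply Lemma~\ref{lemlimstrong} with $M^\ast-I$ in the role of the strongly convergent sequence, but that lemma explicitly requires the multipliers to lie in $\K_\infty$, and $M=\e^{iw\cdot\x}$ is \emph{not} covariant: $U_a M U_a^\ast=\e^{-iw\cdot a}M$. Worse, the two pieces $(M-I)A$ and $A(M^\ast-I)$ produced by your triangle-inequality splitting are themselves non-covariant, hence not elements of $\L^p(\K_\infty)$ at all; the symbol $\|\cdot\|_p$ you apply to them is undefined in the framework of Section~\ref{sectnci}. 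In your last paragraph you correctly sense that Lemma~\ref{lemlimstrong} is delicate, but you locate the obstruction in the unboundedness of $\x$; the actual obstruction is covariance, and it already bites for the bounded unitaries $M$.

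The defect is reparable. One option is to exploit that $M$ is a multiplication operator, hence commutes with $\chi_e$, and is unitary: for $A\in\L^{p,o}(\K_\infty)$ this lets you compute $\|\G_w(A)-A\|_p$ at the level of $A_\omega\chi_e$ in the ordinary Schatten class on $\H$, where left/right multiplication by strongly convergent bounded sequences is continuous; then extend to all of $\L^p(\K_\infty)$ by density and the isometry of $\G_w$. A second option, closer to \cite{BGKS}, is to enlarge the $\L^p$-framework to non-covariant operators so that your splitting is legitimate. A third, most economical, option is to observe that your own Duhamel identity already gives the Lipschitz bound $\|\G_w(A)-A\|_p\le |w|\,\|\nabla A\|_p$ for $A\in\W^{1,p}(\K_\infty)$, which yields strong continuity on a dense set without ever leaving $\L^p(\K_\infty)$; the density input you then need is that of $\W^{1,p}$ in $\L^p$, which is of the same nature as the density you already invoke for the differentiability.
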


%%%%%%%%%%%%%%%%%%%%%%%%%%%%%%%%%%

\section{Linear response theory and Kubo formula}
\label{sectkubo}

\subsection{Adiabatic switching of the electric field}
\setcounter{equation}{0}

We now fix an initial equilibrium state of the system, i.e., we specify a
density matrix ${\zeta}_\omega$ which is in equilibrium, so $[H_\omega,
{\zeta}_\omega ] =0$. For physical applications, we would generally take
${\zeta}_\omega=f(H_\omega)$ with $f$
the Fermi-Dirac distribution at inverse temperature $\beta \in (0,\infty]$ and
\emph{Fermi energy} $E_F \in \R$,  i.e.,
$f(E) = \frac{1}{1+\e^{\beta (E - E_F)}}$ if $\beta < \infty$ and
$f(E)= \chi_{(-\infty,E_F]}(E)$ if $\beta =\infty$; explicitly
\begin{equation}
{\zeta}_\omega \ = \ \begin{cases}  F^{(\beta,E_F)}_\omega \ := \
\frac{1}{1+\e^{\beta (H_\omega - E_F)}} \, , &
\beta < \infty \, , \\
P^{(E_F)}_\omega \ := \ \chi_{(-\infty,E_F]}(H_\omega) \, , &\beta = \infty \, .
\end{cases} \label{hyp}
\end{equation}
However we note that our analysis allows for fairly general functions $f$ \cite{BGKS}. We set $\zeta=(\zeta_\omega)_{\omega\in\Omega}\in\K_\infty$ but shall also write $\zeta_\omega$ instead of $\zeta$. That $f$ is the Fermi-Dirac distribution plays no role in the derivation of the linear response. However computing the Hall conductivity itself (once the linear response performed) we shall restrict our attention to the zero temperature case with the \emph{Fermi projection} $P_\omega^{(E_F)}$.

The system is  described by the ergodic time dependent Hamiltonian
$H_\omega(t)$, as in \eqref{eq:Htilde}.   
Assuming the system was in equilibrium at $t=-\infty$ with the density matrix
$\varrho_\omega(-\infty) = {\zeta}_\omega$, the time dependent density matrix
$\varrho_\omega(t)$ is the solution of the Cauchy problem
for the Liouville equation. Since we shall solve the evolution equation in $\L^p(\K_\infty)$, we work with $H(t)=(H_\omega(t))_{\omega\in\Omega}$, as in Assumption~\ref{RandH}.

The electric field $\El_\eta(t)\cdot\x=\e^{\eta t} \El(t)\cdot\x$ is swichted on adiabatically between $t=-\infty$ and $t=t_0$ (typically $t_0=0)$.  Depending on which conductivity on is interested, one may consider different forms for $\El(t)$. In particular $\El(t)=\El$  leads to the direct conductivity, while $\El(t)=cos(\nu t)\El $ leads to the AC-conductivity at frequency $\nu$\footnote{The AC-conductivity may be better defined using the from \eqref{Egen} as argued in \cite{KLM}.}. The first one is  relevant for studying the Quantum Hall effect (see subsection~\ref{subsectHall}), while the second  enters the Mott's formula \cite{KLP,KLM}.

We write
\begin{equation} \label{P(t)}
{\zeta}(t) = G(t) {\zeta} G(t)^* =\G(t) ({\zeta}),
\quad \text{i.e.,}\quad {\zeta}(t)=f(H(t)).
\end{equation}

\begin{theorem}\label{thmrho} Let $\eta>0$ and assume that $\int_{-\infty}^{t} \e^{\eta r} |\El(r)| \mathrm{d}r < \infty$ for all $t\in\R$. Let $p\in[1,\infty[$. Assume that $\zeta\in\W^{1,p}(\K_\infty)$ and that $\nabla\zeta\in\D_{p}^{o}$. The Cauchy problem
\begin{eqnarray}\label{dynamics}
\left\{ \begin{array}{l}i\partial_t \varrho(t) =
[H(t),\varrho(t))]\\
\lim_{t \to  -\infty} \varrho(t)= {\zeta}
\end{array}\right.  \label{dynamicsL}   ,
\end{eqnarray}
 has a unique  solution in $\mathrm{L}^{p}(\K_\infty)$, that is given by
\begin{eqnarray}
\varrho(t)&=&\lim_{s \to -\infty} { \U}(t,s)\left( {\zeta}
\right)
\label{defrho1}\\
&=& \lim_{s \to -\infty}{ \U}(t,s)\left( {\zeta}(s) \right)
\label{defrho2}\\
&=& {\zeta}(t) - 
\int_{-\infty}^t \mathrm{d} r \,\mathrm{e}^{\eta r}{ \U}(t,r)(\mathbf{E}(r)\cdot\nabla{\zeta}(r)). \label{defrho3}
\end{eqnarray}
We  also have
\begin{eqnarray}
\varrho(t) ={ \U} (t,s) (\varrho(s))\, ,
\;\;\|\varrho(t)\|_{p}=\|{\zeta}\|_{p} \, ,
\end{eqnarray}
for all $t,s$. Furthermore, $\varrho(t)$ is
non-negative, and if ${\zeta}$ is a projection, then so is $\varrho(t)$
 for all $t$.
\end{theorem}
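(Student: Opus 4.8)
The plan is to establish everything by first verifying that the candidate formula \eqref{defrho1} does define a solution in $\L^p(\K_\infty)$, then proving uniqueness, and finally extracting the stated additional properties. First I would show that the limit $\varrho(t)=\lim_{s\to-\infty}\U(t,s)(\zeta)$ exists in $\L^p(\K_\infty)$. To this end, I would insert the Duhamel/gauge decomposition: writing $\zeta=\zeta(s)+(\zeta-\zeta(s))$ and using Lemma~\ref{lemGt} together with the hypothesis $\zeta\in\W^{1,p}(\K_\infty)$, one has $\zeta(s)-\zeta=\int_{-\infty}^s \e^{\eta r}\El(r)\cdot\nabla(\G(r)(\zeta))\,\di r$, which is $\L^p$-small as $s\to-\infty$ thanks to the adiabatic factor $\e^{\eta r}$ and the isometry property of $\G(r)$. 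Since $\U(t,s)$ is an isometry on $\L^p(\K_\infty)$, this shows $\U(t,s)(\zeta)$ and $\U(t,s)(\zeta(s))$ have the same limit (if either exists), giving the equivalence of \eqref{defrho1} and \eqref{defrho2}. The existence of the limit itself I would get from the integrated form: applying the differentiation rule \eqref{eqderivtau}, valid because $\nabla\zeta\in\D_p^{o}$ ensures $H(r)\zeta(r)$ and $\zeta(r)H(r)$ are controlled in $\L^p$ (here one uses \eqref{HtH} and the bounds \eqref{HUAW}), one computes
\begin{equation}
\partial_s\,\U(t,s)(\zeta(s)) = \e^{\eta s}\,\U(t,s)\big(\El(s)\cdot\nabla\zeta(s)\big),
\end{equation}
so that $\U(t,s)(\zeta(s)) = \zeta(t) - \int_s^t \e^{\eta r}\,\U(t,r)(\El(r)\cdot\nabla\zeta(r))\,\di r$; letting $s\to-\infty$, the integral converges absolutely in $\L^p$ by the adiabatic bound and isometry, yielding \eqref{defrho3} and simultaneously the existence of the limit.

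Next, to see $\varrho(t)$ solves the Cauchy problem: the initial condition $\lim_{t\to-\infty}\varrho(t)=\zeta$ follows from \eqref{defrho3} since $\zeta(t)\to\zeta$ by \eqref{liminfty} and the integral term vanishes as $t\to-\infty$. For the Liouville equation, I would differentiate \eqref{defrho3} in $t$: the term $\partial_t\zeta(t)=\El_\eta(t)\cdot\nabla\zeta(t)$ from \eqref{eqderivK} exactly cancels the boundary contribution from differentiating the integral, while \eqref{eqderivU3} applied inside the integral produces $[H(t),\varrho(t)]$. One must check the integrand stays in the domain so that \eqref{eqderivU3} applies — again this is where $\nabla\zeta\in\D_p^{o}$ and the propagation estimate \eqref{HUAW} are used. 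The relation $\varrho(t)=\U(t,s)(\varrho(s))$ is immediate from \eqref{defrho1} and the cocycle law $\U(t,s)\U(s,s')=\U(t,s')$, and $\|\varrho(t)\|_p=\|\zeta\|_p$ follows since each $\U(t,s)$ is an $\L^p$-isometry and $\|\zeta(s)\|_p=\|\zeta\|_p$ by the gauge isometry.

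For uniqueness, suppose $\varrho_1,\varrho_2$ are two $\L^p$-solutions and set $\delta(t)=\varrho_1(t)-\varrho_2(t)$; then $\delta$ solves the homogeneous Liouville equation with $\lim_{t\to-\infty}\delta(t)=0$, and by \eqref{eqderivtau}–\eqref{eqderivU3} the map $s\mapsto\U(s,t)(\delta(s))$ is constant (its $s$-derivative vanishes), hence $\delta(t)=\U(t,s)(\delta(s))$; taking $\|\cdot\|_p$ and using isometry plus $\delta(s)\to0$ gives $\delta(t)=0$. Finally, the non-negativity and projection-preservation: since $\zeta\ge0$, write $\zeta=\eta^*\eta$ with $\eta=\zeta^{1/2}\in\L^{2p}(\K_\infty)$; then $\U(t,s)(\zeta)=U_\omega(t,s)\eta_\omega^*\eta_\omega U_\omega(s,t)=(\eta_\omega U_\omega(s,t))^*(\eta_\omega U_\omega(s,t))\ge0$ fiberwise, and non-negativity is preserved in the $\L^p$-limit. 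If $\zeta$ is a projection, $\U(t,s)(\zeta)=U_\omega(t,s)\zeta_\omega U_\omega(t,s)^*$ is a projection for each finite $s$ (unitary conjugation of a projection), and the limit in $\L^p$ of projections of equal trace is again a projection (idempotency $\varrho(t)^2=\varrho(t)$ passes to the limit using the $\L^p$-continuity of multiplication on bounded sets together with $\|\varrho(t)\|_\infty\le1$). I expect the main obstacle to be the rigorous justification of the term-by-term differentiation of \eqref{defrho3} in $t$ — specifically, interchanging $\partial_t$ with the improper integral $\int_{-\infty}^t$ and verifying that $\U(t,r)(\El(r)\cdot\nabla\zeta(r))$ genuinely lies in the domain of the Liouvillian uniformly enough in $r\le t$ for \eqref{eqderivU3} to be applied under the integral sign; this is exactly where the hypotheses $\zeta\in\W^{1,p}(\K_\infty)$ and $\nabla\zeta\in\D_p^{o}$, combined with the propagation bounds \eqref{HUAW}–\eqref{HUAWbound}, must be carefully marshalled.
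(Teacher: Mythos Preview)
Your existence argument --- defining $\varrho(t,s)=\U(t,s)(\zeta(s))$, computing $\partial_s\varrho(t,s)=\e^{\eta s}\U(t,s)(\El(s)\cdot\nabla\zeta(s))$ via \eqref{eqderivtau} and Lemma~\ref{lemGt}, integrating, and sending $s\to-\infty$ --- is exactly the paper's approach, as is your verification that $\varrho(t)$ solves \eqref{dynamicsL} by differentiating \eqref{defrho3} with the help of \eqref{eqderivU3} and \eqref{HUAW}--\eqref{HUAWbound}. The equivalence \eqref{defrho1}$=$\eqref{defrho2}, the norm preservation, and the non-negativity are also handled the same way.

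There is, however, a genuine gap in your uniqueness argument. You propose to show that $s\mapsto\U(t,s)(\delta(s))$ is constant by differentiating directly and invoking \eqref{eqderivtau}. But \eqref{eqderivtau} (and \eqref{eqderivU3}) is stated under the hypothesis that $H(r_0)A$ and $AH(r_0)$ lie \emph{separately} in $\L^p(\K_\infty)$ for some $r_0$; an arbitrary $\L^p$-solution $\delta(s)$ of the Liouville equation only gives you that the \emph{commutator} $[H(s),\delta(s)]=i\partial_s\delta(s)$ is in $\L^p$, not the individual products. So you cannot legitimately apply \eqref{eqderivtau} to $A=\delta(s)$. The paper avoids this by a duality argument: it pairs $\delta(t)$ against a test element $A\in\D_q^{(0)}$ (where $p^{-1}+q^{-1}=1$), for which the required regularity \emph{is} available, and computes $i\partial_t\T\{\U(t,s)(A)\,\delta(t)\}=0$ using \eqref{eqderivU3} on $\U(t,s)(A)$ together with Lemma~\ref{formulaCommutator}\,(3) to move the commutator with $H(t)$ across the trace. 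This yields $\T\{A\,\U(s,t)(\delta(t))\}=\T\{A\,\delta(s)\}$ for all $A\in\D_q^{(0)}$, hence $\delta(t)=\U(t,s)(\delta(s))$ by Lemma~\ref{lemduality}, and then $\delta(t)=0$ by sending $s\to-\infty$ and using the isometry. The duality step is precisely what buys you uniqueness without assuming extra regularity of the solution.

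A smaller remark on the projection property: your argument relies on $\|\varrho(t)\|_\infty\le 1$ and ``$\L^p$-continuity of multiplication on bounded sets'', which needs justification since $\varrho(t)$ is only obtained as an $\L^p$-limit. The paper instead observes that $\L^p$-convergence implies convergence in the measure topology of $\M(\K_\infty)$, which is a topological $*$-algebra (multiplication is jointly continuous there); idempotency then passes to the limit for free. This is both cleaner and uses exactly the structure set up in Section~\ref{sectnci}.
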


\begin{remark}\label{remxHP}
If the initial state ${\zeta}$ is of the form \eqref{hyp}, then the hypotheses of Theorem~\ref{thmrho} hold for any $p>0$, provided ${\zeta}_\omega=P^{(E_F)}_\omega$ that $E_F$ lies in a region of localization.  This is true for suitable $\A_\omega,V_\omega$ and $E_F$, by the methods of, for
example, \cite{CH,Wa, GK1,GK2, GK3,BGK, AENSS,U,GHK} and for the models studied therein as well as in \cite{CH,GK3}. The bound $\E \||\x|\zeta_\omega\chi_0\|^2 <\infty$ or equivalently $\nabla\zeta\in\mathrm{L}^{2}(\K_\infty)$ is actually sufficient for our applications. For $p=1,2$, we refer to \cite[Proposition~4.2]{BGKS} and \cite[Lemma~5.4]{BGKS} for the derivation of these hypotheses from known results.
\end{remark}

\begin{proof}[Proof of Theorem~\ref{thmrho}]
Let us first define
\begin{equation}\label{defrhots}
\varrho(t,s) :={ \U}(t,s)({\zeta}(s)).
\end{equation}
We get, as operators in $\M(\K_\infty)$,
\begin{eqnarray}
\partial_s \varrho(t,s) & =  &
i\U(t,s)\left([H(s),{\zeta}(s)]\right)
+\U(t,s)\left({\El}_\eta(s)\cdot \nabla{\zeta}(s)\right)
\nonumber
\\
& =& \U(t,s)\left( {\El}_\eta(s)\cdot\nabla{\zeta}(s)\right)
\, ,
\end{eqnarray}
where we used \eqref{eqderivtau} and Lemma~\ref{lemGt}.
As a consequence,
with $\mathbf{E}_\eta(r)= \mathrm{e}^{\eta r}\mathbf{E}(r)$,
\begin{equation}
\varrho(t,t) - \varrho(t,s)
  = \int_s^t \mathrm{d} r\,
 \mathrm{e}^{\eta r}{ \U}(t,r)
({\El}(r)\cdot\nabla{\zeta}(r) )
.
\end{equation}
Since $\|{ \U}(t,r) ({\El}(r)\cdot\nabla({\zeta}(r) )\|_p
 \le c_d |{\El}(r)|\, \|\nabla{\zeta}\|_p < \infty$,
the integral is absolutely convergent by hypothesis on $\El_\eta(t)$, and the limit as $s\to-\infty$ can be performed in $ \mathrm{L}^p(\K_\infty)$. It
yields the equality between \eqref{defrho2} and \eqref{defrho3}. Equality of \eqref{defrho1}
and \eqref{defrho2} follows from Lemma~\ref{lemGt} which gives
\begin{equation} {\zeta} = \lim_{s \to -\infty} {\zeta}(s) \;\;
\mbox{in  $ \mathrm{L}^p(\K_\infty)$.}
\end{equation}
 Since $\U(t,s)$ are isometries on $\mathrm{L}^p(\K_\infty)$,
it follows from \eqref{defrho1} that
$\|\varrho(t)\|_p=\|{\zeta}\|_p$. We also get
$\varrho(t) = \varrho(t)^\ast$. Moreover,
\eqref{defrho1} with the limit in $ \mathrm{L}^p(\K_\infty)$  implies that
$\varrho(t)$ is nonnegative.

Furthermore, if ${\zeta}={\zeta}^2$ then $\varrho(t)$ can be seen to be a projection as follows. Note that  convergence in $\mathrm{L}^p$ implies convergence in $\M(\K_\infty)$, so that,
\begin{multline}\varrho(t) = \sideset{}{^{(\tau)}}\lim_{ s \to -\infty}{ \U}(t,s)\left( {\zeta} \right) =\sideset{}{^{(\tau)}}\lim_{ s \to
-\infty}{ \U}(t,s)\left( {\zeta}\right){ \U}(t,s)\left( {\zeta}\right) \\
=\left\{ \sideset{}{^{(\tau)}}\lim_{ s \to -\infty}{ \U}(t,s)\left( {\zeta}\right)\right\}
 \left\{ \sideset{}{^{(\tau)}}\lim_{ s \to -\infty}{ \U}(t,s)\left( {\zeta}\right)\right\}\ = \
 \varrho(t)^2\, .
\end{multline}
 where we note $\ \sideset{}{^{(\tau)}}\lim$ the limit in the topological algebra $\M(\K_\infty)$.

 To see that $\varrho(t)$ is a solution of  \eqref{dynamics}
 in $\mathrm{L}^p(\K_\infty)$, we differentiate the expression \eqref{defrho3} using
 \eqref{eqderivU3} and Lemma~\ref{lemGt}.
We get
\begin{eqnarray}
i\partial_t  \varrho(t)& =& -
 \int_{-\infty}^t \mathrm{d} r \,\mathrm{e}^{\eta r}
\left[H(t),{ \U}(t,r) \left( \mathbf{E}(r)
\cdot \nabla {\zeta}(r) \right)\right] \label{partiavarho} \\
& =&-\left[H(t), \left\{
 \int_{-\infty}^t \mathrm{d} r \,\mathrm{e}^{\eta r}{ \U}(t,r) \left( \mathbf{E}(r)
\cdot \nabla {\zeta}(r) \right)\right\}\right] \label{partiavarho2}\\
& =&\left[H(t),  \left\{{\zeta}(t) - 
 \int_{-\infty}^t \mathrm{d} r \,\mathrm{e}^{\eta r}{ \U}(t,r) \left( \mathbf{E}(r)
\cdot \nabla {\zeta}(r) \right)\right\}\right] \nonumber\\
& =&\left[H(t), \varrho(t)\right] \label{partiavarho3} .
\end{eqnarray}
The  integral in \eqref{partiavarho} converges since by \eqref{HUAWbound} ,
\begin{equation}\label{intderivative}
\|\left[H(t),{ \U}(t,r) \left( \mathbf{E}(r)
\cdot \nabla {\zeta}(r) \right)\right]\|_p \le 2 C \| (H +\gamma) (\mathbf{E}(r) \cdot \nabla {\zeta}) \|_p    .
\end{equation}
Then we justify going from
\eqref{partiavarho} to \eqref{partiavarho2} by inserting a resolvent $(H(t) +\gamma)^{-1}$ and making use of \eqref{HUAW}.

It remains to show that the solution of \eqref{dynamicsL} is unique in
$\mathrm{L}^p(\K_\infty)$. It suffices to show that if
$\nu(t)$ is a solution of \eqref{dynamicsL} with $\zeta =0$ then
$\nu(t)= 0$ for all $t$. We define $\tilde{\nu}^{(s)}(t)  = \U(s,t)(\nu(t))$ and proceed by duality. Since $p\ge 1$, with pick $q$ s.t. $p^{-1}+q^{-1}=1$.
 If $A \in \D_q^{(0)}$, we have, using Lemma~\ref{lemcyclicity},
\begin{align}
&i\partial_t \T \left\{ A \tilde{\nu}^{(s)}(t)\right\}=
i\partial_t \T \left\{{ \U}(t,s)(A) {\nu}(t)\right\}\\
& \quad  = \T \left\{[H(t),\U(t,s)(A)] \
{\nu}(t)\right\} + \T \left\{{ \U}(t,s)(A)
\Ll_{q}(t)(\nu(t))\right\} \nonumber
 \\
 & \quad = -\T \left\{\U(t,s)(A) 
\Ll_{q}(t)({\nu}(t))\right\} + \T \left\{{ \U}(t,s)(A)\
\Ll_{q}(t)(\nu(t))\right\} = 0\, . \nonumber
\end{align}
We conclude that for all $t$ and $A \in \D^{(0)}_q$ we have
  \begin{equation}\label{Tanu}
\T \left\{ A \tilde{\nu}^{(s)}(t)\right\}=
\T \left\{ A \tilde{\nu}^{(s)}(s)\right\}=
\T \left\{ A {\nu}(s)\right\}.
\end{equation}
Thus  $\tilde{\nu}^{(s)}(t)  = \nu(s)$ by Lemma~\ref{lemduality},
that is,  $\nu(t)=\U(t,s)(\nu (s))$.
 Since by hypothesis $\lim _{s \rightarrow -\infty} \nu(s) =0$, we obtain
that $\nu(t)=0$ for all $t$.
\end{proof}

%%%%%%%%%%%%%%%%%%%%%%%%%%%%%%%%%%%%%%%%%%%%%%%%%%
%%%%%%%%%%%%%%%%%%%%%%%%%%%%%%%%%%%%%%%%%%%%%%%%%%

%%%%%%%%%%%%%%%%%%%%%%%%%%%%%%%%%%%%%%%%%%%%%%%%%%
%%%%%%%%%%%%%%%%%%%%%%%%%%%%%%%%%%%%%%%%%%%%%%%%%%

\subsection{The current and the conductivity}

The velocity operator $\mathbf{v}$ is defined as 
\begin{equation}\label{velocity}
\v = \v(\A) = 2 \Db(\A)  ,
\end{equation}
where $\Db=\Db(\A)$ is defined below \eqref{HtH}. Recall that $\v=2(-i\nabla - \A)=i[H,\x]$ on $\Cc^\infty(\R^d)$. We also set $\Db(t)= \Db(\A+\F_\eta(t))$ as in \eqref{defDAt}, and $\v(t)=2\Db(t)$.

From now on $\varrho(t)$ will denote the unique solution to
\eqref{dynamicsL}, given explicitly in \eqref{defrho3}. 
If $H(t) \varrho(t) \in \mathrm{L}^p(\K_\infty)$ then clearly
$\Db_{j}(t)\varrho(t)$ can be defined as well by
\begin{eqnarray}\label{Dvarrho}
\Db_{j}(t)\varrho(t) = \left(\Db_{j}(t)
 (H(t)+\gamma)^{-1}\right) \left( (H(t)+\gamma)
\varrho(t)\right) ,
\end{eqnarray}
since $\Db_{j}(t)  (H(t)+\gamma)^{-1}\in \K_\infty$, and thus $\Db_{j}(t)\varrho(t)\in \mathrm{L}^p(\K_\infty)$.

\begin{definition}\label{defcurrent}
Starting with a system in equilibrium in state ${\zeta}$,
the net current (per
unit volume), $\J(\eta,\El;{\zeta},t_0)\in\R^d$, generated by switching on an electric
field $\El$ adiabatically  at rate $\eta>0$  between time $-\infty$ and time
$t_0$,  is defined as
\begin{equation}\label{defJ}
\J(\eta,\El;{\zeta},t_0) =  \T \left( \v(t_0) \varrho(t_0)\right)
 - \T \left( \v {\zeta}\right)  .
\end{equation}
\end{definition}

As it is well known, the current is null at equilibrium:

\begin{lemma}\label{lemequilib}
One has $\T(\Db_{j} \zeta)=0$ for all $j=1,\cdots, d$, and thus $  \T \left( \v \zeta \right)=0$.
\end{lemma}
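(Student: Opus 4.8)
The plan is to exploit the cyclicity of the trace (Lemma~\ref{lemcyclicity}) together with the fact that $\zeta$ commutes with $H$. The key algebraic identity is that $\Db_j = \tfrac12 \v_j = \tfrac{i}{2}[H,x_j]$ as quadratic forms, i.e. $\Db_j$ is, up to a constant, a commutator of $H$ with the position operator. Writing $\T(\Db_j\zeta)$ and formally replacing $\Db_j$ by $\tfrac{i}{2}[H,x_j]$ gives $\tfrac{i}{2}\T\big([H,x_j]\zeta\big)$, which by the cyclicity of the trace (in the form of Lemma~\ref{formulaCommutator}, part 3, or directly Lemma~\ref{lemcyclicity}) equals $-\tfrac{i}{2}\T\big(x_j[H,\zeta]\big) = 0$, since $[H,\zeta]=0$ because $\zeta = f(H)$ commutes with $H$. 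Once $\T(\Db_j\zeta)=0$ for every $j$, the statement $\T(\v\zeta)=0$ is immediate from $\v = 2\Db$.

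The main obstacle is that this argument is purely formal as stated: $x_j$ is an unbounded multiplication operator which is \emph{not} affiliated to $\K_\infty$ (it is not covariant), so the commutator formulas of Lemma~\ref{formulaCommutator} and the cyclicity Lemma~\ref{lemcyclicity} do not apply directly with $x_j$ in the slot. The honest route is to use the $\ast$-derivation structure instead: recall $\partial_j A = i[x_j,A]$ is a genuine $\ast$-derivation on $\K_\infty$ commuting with the magnetic translations, and one has $\Db_j = \tfrac{i}{2}\partial_j H$ in an appropriate sense (this is the covariant incarnation of $\v = i[H,\x]$; more precisely $\v_j (H+\gamma)^{-1}$ and related bounded objects are expressible through $\partial_j$ applied to bounded functions of $H$). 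Since $\zeta = f(H)$ with $f$ as in \eqref{hyp}, we have $\partial_j\zeta = f'$-type expression, and more to the point the product rule for $\partial_j$ gives $\partial_j(\text{something involving } H \text{ and } \zeta)$ whose trace vanishes because $\T\circ\partial_j = 0$ on $\L^1(\K_\infty)$ (the trace of a derivative is zero, by translation invariance of $\T$ and the commutation $[\alpha_a,\partial_j]=0$, exactly as $\int \partial_j g = 0$).

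Concretely I would proceed as follows. First, establish $\T(\partial_j B) = 0$ for $B$ in the domain of $\partial_j$ inside $\L^1(\K_\infty)$; this follows since $\partial_j B = i[x_j,B]$ and a direct computation with $\chi_e$ and the covariance relation shows the local traces telescope to zero under the ergodic average — equivalently this is the standard ``trace of a commutator with position vanishes'' fact in the trace-per-unit-volume setting. Second, write $\Db_j\zeta$ as $\tfrac{i}{2}\partial_j(g(H))$ for a suitable primitive $g$ (with $g' = f$, or handle it via the resolvent: $\Db_j (H+\gamma)^{-1} = \tfrac{i}{2}\partial_j\big((H+\gamma)^{-1}\big) \cdot(\text{correction})$, using $\partial_j (H+\gamma)^{-1} = -(H+\gamma)^{-1}(\partial_j H)(H+\gamma)^{-1}$ and $\partial_j H = -2\Db_j$ up to signs), so that $\Db_j\zeta$ is exhibited as an element of the image of $\partial_j$ in $\L^1(\K_\infty)$, using the hypothesis (as in Remark~\ref{remxHP}) that $\nabla\zeta \in \L^2(\K_\infty)$ and $\zeta \in \L^2(\K_\infty)$ guaranteeing the integrability. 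Third, conclude $\T(\Db_j\zeta) = 0$, hence $\T(\v\zeta) = 0$. The delicate point to check carefully is the $\L^1$-membership at each stage so that Lemma~\ref{formulaCommutator} applies; this is where the hypotheses on $\zeta$ (localization of $E_F$, or $\beta<\infty$) enter, but modulo those the vanishing is forced by $\T\circ\partial_j = 0$.
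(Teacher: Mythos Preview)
The paper does not actually supply a proof of Lemma~\ref{lemequilib}; it is stated as ``well known'' and left unproved. So there is no paper proof to compare against, and your proposal has to be judged on its own.

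Your overall strategy is the right one: the fact $\T\circ\partial_j=0$ on $\mathrm{L}^1(\K_\infty)$ is correct and easy (since $x_j\chi_0=\chi_0 x_j$ is bounded, $\tr(\chi_0[x_j,B]\chi_0)=0$ by ordinary cyclicity), and exhibiting $\Db_j\zeta$ as something in the range of $\partial_j$ would indeed finish the job. The gap is in your step ``write $\Db_j\zeta$ as $\tfrac{i}{2}\partial_j(g(H))$ for a primitive $g'=f$.'' This identity is \emph{false} as an operator equation: the noncommutative chain rule does not give $\partial_j g(H)=g'(H)\,\partial_j H$, because $\partial_j H=-2\Db_j$ does not commute with $H$. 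What one has instead is (via Helffer--Sj\"ostrand or a double operator integral) a smeared expression of the type $\partial_j g(H)=2\int R(z)\Db_j R(z)\,\di\mu_g(z)$, which is not $-2\Db_j f(H)$.

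The identity you want does hold \emph{under the trace}, but only after invoking cyclicity inside the integral: $\T(R(z)\Db_j R(z))=\T(\Db_j R(z)^2)$, which then sums to $-\T(\Db_j f(H))$. The trouble is that this cyclicity step needs $R(z)\in\mathrm{L}^q$ for some finite $q$, which fails. So the argument has to be regularized: replace $2\Db_j$ by $f_n(H)\,2\Db_j\,f_n(H)=-f_n(H)\partial_j(HP_n)f_n(H)$ exactly as in \eqref{propfn}, show $\T(\Db_j\zeta)=\lim_n\T(f_n(H)\Db_j f_n(H)\zeta)$ via Lemma~\ref{lemlimstrong}, and then apply $\T\circ\partial_j=0$ and cyclicity to bounded pieces. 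Your sketch names the right tools but skips this regularization, and the specific operator identity you wrote down cannot be used as stated.
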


Throughout the rest of this paper, we shall assume that the electric field has the form \begin{equation}\label{hypE}
\El(t)=\mathcal{E}(t)\El,
\end{equation} 
where $\El\in\C^d$
 gives the intensity of the electric in each direction while $|\mathcal{E}(t)|=\mathcal{O}(1)$ modulates this intensity as time varies. As pointed out above, the two cases of particular interest are $\mathcal{E}(t)=1$ and $\mathcal{E}(t)= \cos(\nu t)$. We may however, as in \cite{KLM}, use the more general form
\begin{equation}\label{Egen}
\mathcal{E}(t) = \int _\mathbb{R}  \cos(\nu t) \hat{\mathcal{E}}(\nu) \mathrm{d}\nu,
\end{equation} 
for suitable $\hat{\mathcal{E}}(\nu)$ (see \cite{KLM}).

 It is useful to rewrite the current \eqref{defJ}, using \eqref{defrho3} and Lemma~\ref{lemequilib}, as
\begin{eqnarray} 
\J(\eta,\El;{\zeta},t_0)&  = & \T \left\{ 2\Db(0)
\left( \varrho(t_0) - {\zeta}(t_0)\right)\right\} \label{formuleJ} \\
 & = &-\T \left\{2
 \int_{-\infty}^{t_0}  \mathrm{d} r\,  \mathrm{e}^{\eta r}\Db(0) \,{ \U}(t_0,r) \left(
 \mathbf{E}(r) \cdot \nabla {\zeta}(r)  \right) \right\}  . \nonumber \\
 & = &-\T \left\{2
 \int_{-\infty}^{t_0}  \mathrm{d} r\,  \mathrm{e}^{\eta r} \mathcal{E}(r) \Db(0) 
 \,{ \U}(t_0,r) \left(
 \mathbf{E} \cdot \nabla {\zeta}(r)  \right) \right\}  . \notag
\end{eqnarray}

The conductivity tensor $\sigma(\eta;{\zeta},t_0)$ is  defined as the derivative  of  the function $\J(\eta,\El;{\zeta},t_0)\colon\R^d \to \R^d$ at
$\El=0$.  Note that
 $\sigma(\eta;{\zeta},t_0)$ is a $d \times d$ matrix
 $\left\{\sigma_{jk}(\eta;{\zeta},t_0)\right\}$.
\begin{definition}\label{defsigma}
For $\eta >0$ and $t_0\in \R$,  the conductivity tensor $\sigma(\eta;{\zeta},t_0)$ is defined as
\begin{equation}\label{defsigmajketa}
\sigma(\eta;{\zeta},t_0) = \partial_{\El} (\J(\eta,\El;{\zeta},t_0))_{\mid\El =0} \, ,
\end{equation}
if it exists. The conductivity tensor $\sigma({\zeta},t_0)$
%(at zero frequency)
is
defined by
\begin{equation}\label{defsigmajk}
\sigma(\zeta,t_0) :=  \lim_{\eta\downarrow 0} \sigma(\eta;\zeta,t_0)\, ,
\end{equation}
whenever the limit exists.
\end{definition}

%%%%%%%%%%%%%%%%%%%%%%%%%%%%%%%%%%%%%%%%%%%%%%%%%%%%%%%%%%%%%%%%%%%%%%%%%%%%%%
%%%%%%%%%%%%%%%%%%%%%%%%%%%%%%%%%%%%%%%%%%%%%%%%%%%%%%%%%%%%%%%%%%%%%%%%%%%%%%
\subsection{Computing the linear response:  a Kubo formula for the conductivity}

The next theorem gives  a ``Kubo formula" for the conductivity at positive adiabatic parameter.
\begin{theorem}\label{thmsgmjk}  Let $\eta > 0$. Under the hypotheses of Theorem~\ref{thmrho} for $p=1$, the current $\J(\eta,\El;\zeta,t_0)$ is differentiable with
respect to $\El$ at $\El=0$ and the derivative $\sigma(\eta;{{\zeta}})$ is given by
\begin{eqnarray}\label{sigmajk}
\sigma_{jk}(\eta;\zeta,t_0) 
&=& \, -\T \left\{ 2  \int_{-\infty}^{t_0}  \mathrm{d} r\,
\mathrm{e}^{\eta r} \mathcal{E}(r)
  \Db_{j} \, \U^{(0)}(t_0-r) \left(\partial_k( {\zeta}) \right)  \right\} .
\end{eqnarray}
\end{theorem}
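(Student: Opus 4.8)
The plan is to differentiate the expression \eqref{formuleJ} for the current with respect to $\El$ at $\El = 0$, exchanging the derivative with the trace and the time integral. Starting from the last line of \eqref{formuleJ}, the current is
\[
\J(\eta,\El;\zeta,t_0) = -\T\left\{ 2 \int_{-\infty}^{t_0} \di r\, \e^{\eta r}\mathcal{E}(r)\, \Db(0)\, \U(t_0,r)\bigl(\El\cdot\nabla\zeta(r)\bigr)\right\}.
\]
The $\El$-dependence sits in three places: the factor $\El\cdot\nabla\zeta(r)$, which is already linear in $\El$; the gauge-transformed state $\zeta(r) = \G(r)(\zeta)$, which depends on $\El$ through $\F_\eta(r)$; and the propagator $\U(t_0,r)$, which depends on $\El$ through the time-dependent Hamiltonian $H(r) = H(\A+\F_\eta(r),V)$. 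Also $\Db(0) = \Db(\A+\F_\eta(0))$ carries $\El$-dependence. The key point is that since the integrand already contains the explicit linear factor $\El\cdot\nabla\zeta(r)$, the derivative at $\El=0$ picks up only the value at $\El=0$ of everything else: $\zeta(r)\to\zeta$ (by \eqref{P(t)} and $\F_\eta=0$ when $\El=0$), $\U(t_0,r)\to\U^{(0)}(t_0-r)$ (by \eqref{U0} and \eqref{U00}), and $\Db(0)\to\Db(\A)=\Db$. This yields directly
\[
\sigma_{jk}(\eta;\zeta,t_0) = -\T\left\{ 2\int_{-\infty}^{t_0}\di r\, \e^{\eta r}\mathcal{E}(r)\, \Db_j\, \U^{(0)}(t_0-r)\bigl(\partial_k\zeta\bigr)\right\},
\]
which is \eqref{sigmajk}.

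The steps I would carry out, in order: first, establish that the map $\El\mapsto \J(\eta,\El;\zeta,t_0)\in\R^d$ is well-defined and that the integral representation \eqref{formuleJ} holds for all small $\El$ (this uses Theorem~\ref{thmrho} with $p=1$, the Duhamel-type representation \eqref{defrho3}, and Lemma~\ref{lemequilib}). Second, I would write the difference quotient $\frac{1}{|\El|}\bigl(\J(\eta,\El;\zeta,t_0) - \J(\eta,0;\zeta,t_0)\bigr)$ — noting $\J(\eta,0;\zeta,t_0)=0$ — and split it into a ``main term'' where everything except the explicit $\El\cdot\nabla\zeta(r)$ factor is frozen at $\El=0$, plus ``remainder terms'' capturing the variation of $\zeta(r)$, of $\U(t_0,r)$, and of $\Db(0)$. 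Third, I would show each remainder term is $o(1)$ as $|\El|\to 0$: for the state, use the strong continuity of $\G(r)$ on $\L^p(\K_\infty)$ from Lemma~\ref{lemGt} together with \eqref{liminfty}-type estimates and the hypothesis $\nabla\zeta\in\D_p^o$; for the propagator, use \eqref{DuhamelU2} together with the $\L^p$-continuity statement \eqref{DuhamelU} and the bounds \eqref{HUAW}; for $\Db(0)$, use \eqref{defDAt}, i.e. $\Db(\A+\F_\eta(0)) = \Db(\A)-\F_\eta(0)$, and the fact that $\F_\eta(0)=O(|\El|)$, bounding the extra term $\F_\eta(0)\cdot\ldots$ against $\|\nabla\zeta\|_p$. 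Throughout, dominated convergence in $r$ is justified by the uniform bound $\|\Db_j(0)\,\U(t_0,r)(\El\cdot\nabla\zeta(r))\|_1 \le c\,\e^{\eta r}|\mathcal{E}(r)|\,|\El|\,\|(H+\gamma)\nabla\zeta\|_1$ coming from \eqref{Dvarrho} and \eqref{HUAW}, which is integrable on $(-\infty,t_0]$ by the hypothesis on $\El_\eta$.

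The main obstacle is controlling the variation of the propagator $\U(t_0,r)$ as a function of $\El$ in the $\L^1(\K_\infty)$-norm, uniformly enough in $r\in(-\infty,t_0]$ to pass the limit inside the integral. The subtlety is that $\U(t_0,r)$ depends on $\El$ not just through the endpoints but through the entire time-dependent generator $H(\cdot)$ on the interval $[r,t_0]$; one must use the Duhamel formula \eqref{DuhamelU} iterated (or a Gr\"onwall argument at the level of $\L^p(\K_\infty)$) to show that, when applied to a \emph{fixed} element with $(H+\gamma)(\cdot)\in\L^1$, the difference $\U(t_0,r)(A) - \U^{(0)}(t_0-r)(A)$ tends to $0$ in $\L^1(\K_\infty)$ as $|\El|\to 0$ at a rate controlled by $|\F_\eta|_{\infty,[r,t_0]} = O(|\El|)$, uniformly in $r$ up to the integrable weight $\e^{\eta r}$. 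Since $\F_\eta(r) = \int_{-\infty}^r \e^{\eta s}\mathcal{E}(s)\El\,\di s$ is bounded in $r$ by $C|\El|$, this is where the adiabatic weight $\e^{\eta r}$ and the hypothesis $\int_{-\infty}^{t_0}\e^{\eta r}|\El(r)|\di r<\infty$ do the work. Once these uniform estimates are in place, the exchange of limit, trace, and integral is routine, and collecting the frozen main term gives exactly \eqref{sigmajk}.
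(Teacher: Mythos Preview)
Your proposal is correct and follows the same overall strategy as the paper: since the integrand in \eqref{formuleJ} already carries the linear factor $\El\cdot\nabla\zeta(r)$, the derivative at $\El=0$ reduces to evaluating everything else at $\El=0$, and the exchange of limit with the $r$-integral is justified by dominated convergence via the resolvent bound \eqref{Dvarrho} and \eqref{HUAW}. The paper proceeds exactly this way.

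The one technical point where your execution diverges from the paper is the handling of the unbounded $\Db_j$ when passing to the limit inside the trace. You propose to show directly that $\U(\El,t_0,r)(A)\to\U^{(0)}(t_0-r)(A)$ in $\mathrm{L}^1(\K_\infty)$ via an $\mathrm{L}^p$-level Duhamel/Gr\"onwall argument, and then absorb $\Db_j$ through \eqref{Dvarrho}; this works but requires checking strong convergence of operators of the type $(H(t_0)+\gamma)U(\El,t_0,r)(H(r)+\gamma)^{-1}$. The paper instead avoids this by a cutoff-and-cyclicity trick: it introduces $f_n(H)$ with $f_n\to 1$, uses Lemma~\ref{lemlimstrong} to insert $f_n(H)$ next to $\Db_j$, then uses the centrality of $\T$ (Lemma~\ref{lemcyclicity}) to move $f_n(H)\Db_j$ to the right, where it sits next to a resolvent and becomes a fixed element of $\K_\infty$. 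After that, only the bare strong convergence $U(\El,t,s)\to U^{(0)}(t-s)$ from \eqref{DuhamelU2} combined with Lemma~\ref{lemlimstrong} is needed. Your route is more direct conceptually; the paper's route is lighter on the analytic prerequisites.
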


The analogue of \cite[Eq.~(41)]{BESB} and
\cite[Theorem~1]{SBB2} then holds:
 
\begin{corollary}\label{corsgmjk} Assume that $\mathcal{E}(t)=\Re\e^{i\nu t}$, $\nu\in\R$, then the conductivity  $\sigma_{jk}(\eta;{{\zeta}};\nu) $ at frequency $\nu$ is given by
\begin{eqnarray}\label{sigmajkbis}
\sigma_{jk}(\eta;\zeta;\nu;0) 
&=&\, - \T \left\{ 2 \Db_{j} \,
(i\mathcal{L}_1 +\eta +i\nu)^{-1} \left( \partial_k \zeta )\right\}\right\rbrace\ ,
\end{eqnarray}
\end{corollary}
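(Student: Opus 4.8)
The plan is to start from the Kubo formula of Theorem~\ref{thmsgmjk}, specialize $\mathcal{E}(r)=\Re\,\e^{i\nu r}=\cos(\nu r)$, and recognize the time integral as a Laplace-type transform of the group $\U^{(0)}$, which will produce the resolvent of its generator $\Ll_1$. Concretely, I would substitute $\mathcal{E}(r)=\Re\,\e^{i\nu r}$ into \eqref{sigmajk} and change variables $u=t_0-r$ so that, taking $t_0=0$ as in the statement, the integral becomes
\begin{equation}
\sigma_{jk}(\eta;\zeta;\nu;0) = -\T\left\{2\Db_j\int_0^\infty \di u\,\e^{-\eta u}\,\Re\bigl(\e^{-i\nu u}\bigr)\,\U^{(0)}(u)\bigl(\partial_k\zeta\bigr)\right\},
\end{equation}
after noting $\e^{\eta r}\Re(\e^{i\nu r}) = \e^{\eta(-u)}\Re(\e^{-i\nu u})$ up to the innocuous overall bookkeeping of signs in the exponent of $\e^{\eta r}$ with $r=-u\le 0$. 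The key algebraic identity is then that, since $\U^{(0)}(u)=\e^{-iu\Ll_1}$ is a strongly continuous isometric semigroup on $\mathrm{L}^1(\K_\infty)$ for $u\ge 0$ (Proposition~\ref{Liouvillian}) and $\eta>0$, the operator $i\Ll_1+\eta+i\nu$ is invertible with
\begin{equation}
(i\Ll_1+\eta+i\nu)^{-1} = \int_0^\infty \e^{-(\eta+i\nu)u}\,\e^{-iu\Ll_1}\,\di u
\end{equation}
as a Bochner integral of $\mathrm{L}^1(\K_\infty)$-valued functions, the integrand being integrable because $\|\e^{-iu\Ll_1}\|\le 1$ and $\e^{-\eta u}$ is integrable.

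The steps in order: first, check that $\partial_k\zeta\in\mathrm{L}^1(\K_\infty)$ lies in a suitable set so that $u\mapsto\U^{(0)}(u)(\partial_k\zeta)$ is a continuous $\mathrm{L}^1$-valued function — this is immediate from the hypotheses of Theorem~\ref{thmrho} with $p=1$ (namely $\zeta\in\W^{1,1}(\K_\infty)$) and strong continuity of $\U^{(0)}$. Second, apply $\Db_j$: by \eqref{Dvarrho}-type reasoning $\Db_j(H+\gamma)^{-1}\in\K_\infty$, so $\Db_j$ acts as left multiplication by a bounded operator composed with $(H+\gamma)$, and under the hypothesis $\nabla\zeta\in\D_1^{o}$ one has $(H+\gamma)\partial_k\zeta\in\mathrm{L}^1(\K_\infty)$; combined with \eqref{HUAW} this gives that $u\mapsto\Db_j\,\U^{(0)}(u)(\partial_k\zeta)$ is a continuous bounded $\mathrm{L}^1$-valued function, so the $u$-integral is an absolutely convergent Bochner integral and $\Db_j$ (being a closed operation given by a bounded-operator factor) commutes with it. Third, split $\Re\,\e^{-i\nu u}$ and use the resolvent representation above to collapse the integral, then reassemble the real part into the single complex resolvent $(i\Ll_1+\eta+i\nu)^{-1}$; since $\zeta=\zeta^\ast$ and $\partial_k$ is a $\ast$-derivation, $\partial_k\zeta$ is self-adjoint and the $\T\{2\Db_j(\cdots)\}$ of the $\nu\to-\nu$ conjugate term reproduces the complex conjugate, so writing things with $\e^{i\nu t}$ as in the statement is consistent. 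Fourth, pull the resolvent through to land exactly on \eqref{sigmajkbis}.

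The main obstacle I anticipate is the technical justification of interchanging the (unbounded) operation $\Db_j$ — equivalently the left action relating to $H+\gamma$ — with the Bochner integral over $u$, and simultaneously identifying $\int_0^\infty\e^{-(\eta+i\nu)u}\U^{(0)}(u)\,\di u$ with $(i\Ll_1+\eta+i\nu)^{-1}$ on the \emph{non-Hilbert} space $\mathrm{L}^1(\K_\infty)$ where $\Ll_1$ is merely the generator of an isometric $C_0$-group, not self-adjoint. This is handled by the standard Hille-Yosida resolvent formula for $C_0$-semigroups (here the restriction to $u\ge 0$ of the group, with exponential damping $\e^{-\eta u}$ ensuring we are well inside the region of convergence since the group is isometric and $\mathrm{Re}(\eta+i\nu)=\eta>0$), together with the factorization $\Db_j = \bigl(\Db_j(H+\gamma)^{-1}\bigr)(H+\gamma)$ with the first factor in $\K_\infty$; the estimate \eqref{HUAW} guarantees $(H+\gamma)\U^{(0)}(u)(\partial_k\zeta)$ stays bounded in $\mathrm{L}^1(\K_\infty)$ uniformly for $u$ in compacts, and continuity in $u$ then lets one move the bounded factor and the resolvent $(H+\gamma)^{-1}$ inside and out of the integral freely. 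Everything else is routine once these two points are in place.
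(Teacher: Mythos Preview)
Your proposal is correct and follows essentially the same route as the paper's proof: both argue that the integrand's trace is real (the paper via $\zeta=\zeta^{1/2}\zeta^{1/2}$, you via self-adjointness of $\partial_k\zeta$ together with cyclicity---equivalent in effect), then replace $\cos(\nu r)$ by $\e^{i\nu r}$ with a $\Re$ pulled outside, and finally integrate in $r$ to obtain the resolvent of $\Ll_1$. Your treatment is considerably more detailed on the technicalities (the Hille--Yosida resolvent representation on $\mathrm{L}^1(\K_\infty)$, commuting $\Db_j$ with the Bochner integral via the factorization through $(H+\gamma)^{-1}$) than the paper, which dispatches the whole computation with the single phrase ``Integrating over $r$ yields the result.''
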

\begin{proof}[Proof of corollary~\ref{corsgmjk}]
 Recall (4.11) , in particular $\zeta=\zeta^{\frac{1}{2}}\zeta^{\frac{1}{2}}$. It follows that $\sigma(\eta;\nu;\zeta;0)$ in (4.27) is real (for arbitrary $\zeta=f(H)$ write $f=f_+-f_-$). As a consequence ,
 \begin{eqnarray}
\sigma(\eta;\nu;\zeta;0)=-\Re\T \left\{ 2  \int_{-\infty}^{t_0}  \mathrm{d} r\,
\mathrm{e}^{\eta r} \mathrm{e}^{i\nu r}
\Db_{j} \, \U^{(0)}(t_0-r) \left(\partial_k( {\zeta}) \right)  \right\} .
\end{eqnarray}
Integrating over $r$ yields the result.
\end{proof}
  
\begin{proof}[Proof of Theorem~\ref{thmsgmjk}]
For clarity, in this proof we display the argument $\El$ in all
functions which depend on $\El$. From \eqref{formuleJ} and $\J_j(\eta,0;\zeta,t_0)= 0$ (Lemma \ref{lemequilib}), we
have
\begin{equation}\label{defsigmajk3}
    \sigma_{jk}(\eta;\zeta,t_0)=-\lim_{E
\to 0} 2\T \left\{
 \int_{-\infty}^{t_0}  \mathrm{d} r\,  \mathrm{e}^{\eta r}\mathcal{E}(r)
\Db_{\El,j}(0)
 \U(\El,0,r) \left( \partial_k \zeta(\El,r) \right) \right\} .
\end{equation}

First understand we can interchange integration and the limit $\El
\rightarrow 0$, and get
\begin{equation}
\label{defsigmajk31}
    \sigma_{jk}(\eta;\zeta,t_0) \ = \ -2 \int_{-\infty}^{t_0}  \mathrm{d} r\,  \mathrm{e}^{\eta r}\mathcal{E}(r)
 \lim_{{E} \to 0}  \T \left\{  \Db_{j}({\El}, 0)
{ \U}({\El},0,r) \left(\partial_k  {\zeta}({\El}, r)  \right)
\right\}\, .
\end{equation}
The latter can  easily be seen by inserting a resolvent $(H(t) +\gamma)^{-1}$ and making use of \eqref{HUAW}, the fact that $H\nabla\zeta\in \mathrm{L}^1(\K_\infty)$, the inequality : $\vert\T(A)\vert\leq\T(\vert A\vert)$
and dominated convergence.

Next, we note that for any $s$ we have
\begin{eqnarray}\label{GEk}
\lim_{{E} \to 0}\G({\El}, s)= I \;\;\mbox{strongly in $\mathrm{L}^1(\K_\infty)$}\, ,
\end{eqnarray}
which can be proven  by a argument similar to the one used to prove
Lemma~\ref{lemGt}.
 Along the same lines, for $B \in \K_\infty$   we have
\begin{eqnarray}\label{GEk2}
\lim_{{E} \to 0}\G({\El}, s)(B_\omega)= B_\omega  \;\;\mbox{strongly in $\H$,
with  $\|\G({\El}, s)(B)\|_\infty=
\|B\|_\infty $}\, .
\end{eqnarray}
Recalling that $\Db_{j,\omega}({\El}, 0) = \Db_{j,\omega} - \F_j(0)$ and that $\|\partial_k  {\zeta}({\El}, r)\|_1=\|\partial_k {\zeta}\|_1<\infty$, using Lemma~\ref{lemlimstrong},
\begin{align}
\lim_{{E} \to 0} & \T \left\{  \Db_{j}({\El}, 0){ \U}({\El},0,r) \left(\partial_k {\zeta}({\El}, r)
  \right)\right\}  = \lim_{{\El} \to 0}  \T \left\{
    \Db_{j}  
    U({\El}, 0,r) (\partial_k {\zeta})   U({\El},r,0)
    \right\} \nonumber \\
 & = \lim_{{\El} \to 0}  \T \left\{
    \Db_{j} U({\El}, 0,r) (\partial_k {\zeta} ) 
    U^{(0)}(r) \right\} \label{limit101},
\end{align}
where we have inserted (and removed) the resolvents $(H(\El,r) + \gamma)^{-1}$ and
$(H +\gamma)^{-1}$.

To proceed it is convenient to introduce a cutoff so that we can deal with
$\Db_{j}$ as if it were in $\K_\infty$.  Thus we pick $f_n\in
C^\infty_c(\R)$, real valued, $|f_n|\le 1$, $f_n=1$ on $[-n,n]$, so that $f_n(H)$ converges strongly to $1$. Using Lemma~\ref{lemlimstrong}, we have
\begin{align}
& \T \left\{
    \Db_{j} U({\El}, 0,r) (\partial_k {\zeta} ) 
    U^{(0)}(r) \right\} 
     =   \lim_{n\to\infty} \T \left\{f_n(H) \Db_{j}
    U({\El},0,r)  (\partial_k {\zeta} )  U^{(0)}(r)
    \right \} \nonumber \\
& =  \lim_{n\to\infty} \T \left\{  U({\El},0,r) 
    \left( (\partial_k {\zeta} )(H+\gamma) \right)
U^{(0)}(r) ( H+\gamma)^{-1} f_n(H) \Db_{j} \right \}  \nonumber \\
& = \label{DUQ6} \T \left\{  U({\El},0,r)  \left(
(\partial_k {\zeta})(H+\gamma)  \right)
\left( U^{(0)}(r) (H+\gamma )^{-1} \Db_{j} \right)\right \} \; ,
\end{align}
where we used  the centrality of the trace, the fact that
$ (H+\gamma )^{-1}$ commutes with $U^{(0)}$  and then that
 $ ( H+\gamma)^{-1}  \Db_{j} \in \K_\infty$ in order to remove to limit $n\to\infty$.
  Finally, combining \eqref{limit101} and \eqref{DUQ6}, we get
\begin{align}
\lim_{{E} \to 0} & \T \left\{  \Db_{j}({\El}, 0){ \U}({\El},0,r) \left(\partial_{k}{\zeta}({\El}, r)  \right)\right\}
 \\ & = \ \T \left\{  U^{(0)}(-r) \ \left(
( \partial_{k} {\zeta})(H+\gamma) \right)
U^{(0)}(r) (H+\gamma)^{-1}  \Db_{j} 
\right \} \nonumber \\
& = \T \left\{  \Db_{j}{ \U}^{(0)}(-r) (\partial_{k} {\zeta}) \right \}\label{finally} .
\end{align}
The Kubo formula \eqref{sigmajk} now follows from
\eqref{defsigmajk31}  and \eqref{finally}.
\end{proof}

%%%%%%%%%%%%%%%%%%%%%%%%%%%%%%%%%%%%%%%%%%%%%%%%%%%%%%%%%%%%%%%%%%%%%%%%%%%%%%%%
%%%%%%%%%%%%%%%%%%%%%%%%%%%%%%%%%%%%%%%%%%%%%%%%%%%%%%%%%%%%%%%%%%%%%%%%%%%%%%%%

%%%%%%%%%%%%%%%%%%%%%%%%%%%%%%%%%%%%%%%%%%%%%%%%%%%%%%%%%%%%%%%%%%%%%%%%%%%%%%%%
%%%%%%%%%%%%%%%%%%%%%%%%%%%%%%%%%%%%%%%%%%%%%%%%%%%%%%%%%%%%%%%%%%%%%%%%%%%%%%%%

\subsection{The  Kubo-St\u{r}eda formula for the Hall conductivity}
\label{subsectHall}

Following \cite{BESB,AG}, we now recover the well-known  Kubo-St\u{r}eda formula
 for the Hall
conductivity at zero temperature (see however Remark~\ref{remAC} for AC-conductivity). To that aim we consider  the case $\mathcal{E}(t)=1$ and $t_0=0$. Recall Definition~\ref{defsigma}. We write
\begin{equation}
    \sigma_{j,k}^{(E_f)} = \sigma_{j,k}(P^{(E_F)},0) \; , \text{ and } \
    \sigma_{j,k}^{(E_f)}(\eta) = \sigma_{j,k}(\eta;P^{(E_F)},0) \; .
\end{equation}

\begin{theorem}\label{thmHall}
Take $\mathcal{E}(t)=1$ and $t_0=0$. If ${\zeta}= P^{(E_F)}$ is a
Fermi projection satisfying the hypotheses of Theorem~\ref{thmrho} with $p=2$, we have
\begin{eqnarray}\label{expHall}
\sigma_{j,k}^{(E_F)}
=-i \T \left\{ P^{(E_F)}\left[  \partial_j P^{(E_F)} ,
 \partial_k P^{(E_F)}\right
] \right\},
\end{eqnarray}
for all $j,k=1,2,\ldots,d$.  As a consequence,  the conductivity tensor is antisymmetric;
in particular $\sigma_{j,j}^{(E_F)} =0$ for $j=1,2,\ldots,d$.
\end{theorem}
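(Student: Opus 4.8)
The plan is to start from the Kubo formula \eqref{sigmajk} of Theorem~\ref{thmsgmjk}, specialized to $\mathcal{E}(t)=1$, $t_0=0$, and $\zeta=P:=P^{(E_F)}$, and then perform the adiabatic limit $\eta\downarrow 0$. Thus the first step is to write
\begin{equation*}
\sigma_{jk}^{(E_F)}(\eta)=-2\int_{-\infty}^0 \mathrm{d}r\,\mathrm{e}^{\eta r}\,\T\left\{\Db_j\,\U^{(0)}(-r)(\partial_k P)\right\}=-2\,\T\left\{\Db_j\,(i\mathcal{L}_2+\eta)^{-1}(\partial_k P)\right\},
\end{equation*}
using Corollary~\ref{corsgmjk} with $\nu=0$ (here $p=2$, which is why we require the hypotheses of Theorem~\ref{thmrho} with $p=2$ and $\nabla P\in\mathrm{L}^2(\K_\infty)$). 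The key algebraic input is that since $P$ is a projection commuting with $H$, we have $\mathcal{L}_2(\partial_k P)=[H,i[x_k,P]]$ and, crucially, $\partial_k P = P(\partial_k P)P^{\perp}+P^{\perp}(\partial_k P)P$ has no diagonal block (because $P(\partial_k P)P = P(\partial_k(P^2)-\ldots)$; more precisely differentiating $P=P^2$ gives $\partial_k P = (\partial_k P)P+P(\partial_k P)$, hence $P(\partial_k P)P=0$ and $P^{\perp}(\partial_k P)P^{\perp}=0$). This off-diagonal structure is what makes $\partial_k P$ lie in the range of $\mathcal{L}_2$ in a controlled way and lets one invert $i\mathcal{L}_2$ on it as $\eta\to 0$.

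The second step is the adiabatic limit itself. I would decompose $\partial_k P$ into its two off-diagonal pieces and use the spectral theorem for the self-adjoint operator $\mathcal{L}_2$ (acting on the Hilbert space $\mathrm{L}^2(\K_\infty)$): on the piece $P(\partial_k P)P^{\perp}$ the operator $H$ acts on the left through the spectral projection onto $(-\infty,E_F]$ and on the right onto $(E_F,\infty)$, so $\mathcal{L}_2$ is bounded away from $0$ there (its "spectral gap" on this subspace is controlled once one knows $\partial_k P\in\mathrm{L}^2$ together with $H\partial_k P, (\partial_k P)H\in\mathrm{L}^2$, i.e. $\nabla P\in\D_p^{o}$). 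Consequently $(i\mathcal{L}_2+\eta)^{-1}(\partial_k P)$ converges in $\mathrm{L}^2(\K_\infty)$ as $\eta\downarrow 0$ to $(i\mathcal{L}_2)^{-1}(\partial_k P)$, and one identifies this limit: a direct computation shows $(i\mathcal{L}_2)^{-1}(\partial_k P) = -i\,[P,\partial_k P]$ (check: $i\mathcal{L}_2(-i[P,\partial_k P]) = [H,[P,\partial_k P]]$, and using $[H,P]=0$ this reduces to $\partial_k P$ on the off-diagonal part). Then, by joint continuity of the bilinear form $\T(\Db_j\,\cdot)$ (Hölder, since $\Db_j\U^{(0)}(\cdot)(\cdot)$ stays in $\mathrm{L}^2$ paired against $\mathrm{L}^2$, recalling $\Db_j(H+\gamma)^{-1}\in\K_\infty$),
\begin{equation*}
\sigma_{jk}^{(E_F)}=\lim_{\eta\downarrow 0}\sigma_{jk}^{(E_F)}(\eta)=-2\,\T\left\{\Db_j\,(-i)[P,\partial_k P]\right\}=2i\,\T\left\{\Db_j\,[P,\partial_k P]\right\}.
\end{equation*}

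The third step is to massage $2i\,\T\{\Db_j[P,\partial_k P]\}$ into the stated form $-i\,\T\{P[\partial_j P,\partial_k P]\}$. Here I would use $\v=2\Db=i[H,\x]$, so $2\Db_j = i[H,x_j]$ and $\T\{\Db_j[P,\partial_k P]\} = \tfrac12\T\{i[H,x_j][P,\partial_k P]\}$; then move the commutator with $H$ onto the other factor via Lemma~\ref{formulaCommutator}(3) (valid since $[H,[P,\partial_k P]]\in\mathrm{L}^1$ and $[H,x_j]$ pairs appropriately after inserting resolvents), obtaining $-\tfrac{i}{2}\T\{x_j[H,[P,\partial_k P]]\}$. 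Since $[H,[P,\partial_k P]]=\partial_k P$ on the relevant off-diagonal part (as computed above), this equals $-\tfrac{i}{2}\T\{x_j\,\partial_k P\}$-type terms that reassemble; cleaner is to use $i[H,x_j]=\v_j$ and the cyclicity Lemma~\ref{lemcyclicity} together with $\partial_j = i[x_j,\cdot]$ to write everything in terms of $\partial_j P,\partial_k P$ and $P$. Expanding $[P,\partial_k P]=P\partial_k P-\partial_k P\,P = \partial_k P - 2\partial_k P\,P$ (using $P\partial_k P+\partial_k P\,P=\partial_k P$) and using $P(\partial_j P)P=0$, $P^{\perp}(\partial_j P)P^{\perp}=0$, the trace collapses to the single term $-i\,\T\{P[\partial_j P,\partial_k P]\}$; the antisymmetry in $(j,k)$ and hence $\sigma_{jj}^{(E_F)}=0$ is then immediate from the cyclicity of $\T$ and $P[\partial_j P,\partial_k P]P = P(\partial_j P)P^{\perp}(\partial_k P)P - P(\partial_k P)P^{\perp}(\partial_j P)P$ being manifestly antisymmetric under $j\leftrightarrow k$.

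I expect the main obstacle to be the rigorous justification of the adiabatic limit: showing that $(i\mathcal{L}_2+\eta)^{-1}(\partial_k P)$ actually converges in $\mathrm{L}^2(\K_\infty)$ as $\eta\downarrow 0$. Pointwise (in the spectral representation of $\mathcal{L}_2$) the resolvent converges, but one needs a uniform-integrability / dominated-convergence input, which is precisely where the hypothesis $\nabla P\in\D_p^{o}$ (equivalently $H\partial_k P,(\partial_k P)H\in\mathrm{L}^2$) enters: it guarantees that the spectral measure of $\partial_k P$ with respect to $\mathcal{L}_2$ assigns no mass to a neighborhood of $0$ in a quantitative enough way (this is the noncommutative avatar of the localization estimate $\E\||\x|P_\omega\chi_0\|^2<\infty$ used in \cite{BGKS}). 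All the commutator manipulations in steps one and three are routine given Lemmas~\ref{lemcyclicity} and~\ref{formulaCommutator} and the fact that the relevant products land in $\mathrm{L}^1(\K_\infty)$; the only care needed is to insert resolvents $(H+\gamma)^{-1}$ whenever an unbounded factor like $\Db_j$ or $x_j$ appears, exactly as in the proof of Theorem~\ref{thmsgmjk}.
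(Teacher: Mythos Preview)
Your second step contains a genuine error. You claim that $(i\mathcal{L}_2+\eta)^{-1}(\partial_k P)\to -i[P,\partial_k P]$, which would require $\mathcal{L}_2([P,\partial_k P])=\partial_k P$. But this identity is false: by Jacobi and $[H,P]=0$ one has $[H,[P,\partial_k P]]=[P,[H,\partial_k P]]$, which carries $[H,\partial_k P]$ in the off-diagonal blocks, not $\partial_k P$. In a two-level example with $HP=aP$, $HP^\perp=bP^\perp$ one finds $\mathcal{L}_2([P,\partial_k P])=(a-b)\,\partial_k P$, not $\partial_k P$. More seriously, the very existence of $\lim_{\eta\downarrow 0}(i\mathcal{L}_2+\eta)^{-1}(\partial_k P)$ requires $\partial_k P\in\mathrm{Range}(\mathcal{L}_2)$, not merely $\partial_k P\perp\mathrm{Ker}(\mathcal{L}_2)$. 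When $E_F$ lies in a region of localization (not a spectral gap), the Liouvillian restricted to off-diagonal elements has spectrum accumulating at $0$, and the hypothesis $H\partial_k P\in\mathrm{L}^2$ does \emph{not} furnish the spectral gap you invoke. Your third step also has a formal problem: the expression $\T\{x_j[H,[P,\partial_k P]]\}$ is meaningless since $x_j$ belongs to no $\mathrm{L}^p(\K_\infty)$, and inserting resolvents does not repair this particular manipulation.

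The paper avoids both difficulties by reversing the order of operations. Before taking $\eta\downarrow 0$, it uses the triple-commutator identity $\partial_k P=[P,[P,\partial_k P]]$ (Lemma~\ref{lemtriplecom}) together with $[P,\partial_j(HP_n)]=[HP_n,\partial_j P]$ and a regularization $f_n(H)$ of the velocity to rewrite the integrand so that, after the $r$-integration, one obtains
\[
\sigma_{jk}^{(E_F)}(\eta)=-\bigl\langle\, i(\mathcal{L}_2+i\eta)^{-1}\mathcal{L}_2\bigl([P,\partial_j P]\bigr),\ \partial_k P\,\bigr\rangle_{\mathrm{L}^2}.
\]
The point is that $(\mathcal{L}_2+i\eta)^{-1}\mathcal{L}_2$ is uniformly bounded and converges strongly to the projection onto $(\mathrm{Ker}\,\mathcal{L}_2)^\perp$ by the spectral theorem; since $[P,\partial_j P]\in(\mathrm{Ker}\,\mathcal{L}_2)^\perp$, the limit is simply $[P,\partial_j P]$, and no inversion of $\mathcal{L}_2$ on $\partial_k P$ is ever needed. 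This is the missing idea: produce an extra factor of $\mathcal{L}_2$ \emph{before} the adiabatic limit, rather than trying to invert $\mathcal{L}_2$ afterwards.
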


Clearly the direct conductivity vanishes, $\sigma_{jj}^{(E_F)}=0$. Note that,
if the system is time-reversible the off diagonal elements are zero in the
region of localization, as expected.
\begin{corollary} \label{corHall}
Under the assumptions of Theorem \ref{thmHall}, if $\A=0$ (no magnetic field),
we have $\sigma_{j,k}^{(E_F)} =0$ for all $j,k=1,2,\ldots,d$.
\end{corollary}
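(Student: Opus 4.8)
The plan is to exploit the time-reversal symmetry that the vanishing of the magnetic potential confers on the reference Hamiltonian. By Theorem~\ref{thmHall} it is enough to show that, when $\A=0$, one has $\T\{P^{(E_F)}[\partial_j P^{(E_F)},\partial_k P^{(E_F)}]\}=0$ for all $j,k$. Write $P:=P^{(E_F)}$ and $M:=P\,[\partial_j P,\partial_k P]$; this is a genuine element of $\mathrm{L}^1(\K_\infty)$ since $\nabla P\in\mathrm{L}^2(\K_\infty)$ under the hypotheses of Theorem~\ref{thmrho} with $p=2$ (each component $\partial_\ell P=i[x_\ell,P]$ then lies in $\mathrm{L}^2(\K_\infty)$), and the Hölder inequality \eqref{Holder} together with $P\in\mathrm{L}^\infty(\K_\infty)$ make the products well defined in $\mathrm{L}^1(\K_\infty)$. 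I will show that $\T(M)$ is simultaneously purely imaginary and real, whence $\T(M)=0$.

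That $\T(M)\in i\R$ holds for any $\A$: since $P$ and $x_\ell$ are self-adjoint, each $\partial_\ell P=i[x_\ell,P]$ is self-adjoint, so $[\partial_j P,\partial_k P]^{\ast}=-[\partial_j P,\partial_k P]$, hence $M^{\ast}=-[\partial_j P,\partial_k P]\,P$; the cyclicity of the trace (Lemma~\ref{lemcyclicity}) then gives $\T(M^{\ast})=-\T([\partial_j P,\partial_k P]\,P)=-\T(M)$, while directly $\T(M^{\ast})=\overline{\T(M)}$. Thus $\overline{\T(M)}=-\T(M)$. (This is nothing but the reality of the conductivity tensor already used in the proof of Corollary~\ref{corsgmjk}.)

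It remains to prove that when $\A=0$ the number $\T(M)$ is real. Let $C$ be complex conjugation on $\H=\mathrm{L}^2(\R^d)$, an antiunitary involution, and set $\bar A:=CAC$. Because $\A=0$ means there is no magnetic field, the matrix $S$ in the magnetic translations $(U(a)\psi)(x)=\e^{ia\cdot Sx}\psi(x-a)$ may be taken to be $0$, so $C$ commutes with every $U(a)$; since $C$ also commutes with $\chi_e$, the conjugate-linear involution $A\mapsto\bar A$ maps $\K_\infty$ onto $\K_\infty$, restricts to an isometry of each $\mathrm{L}^p(\K_\infty)$ onto itself, and satisfies $\T(\bar A)=\overline{\T(A)}$ because $\tr\bigl(C\,\chi_e A_\omega\chi_e\,C\bigr)=\overline{\tr(\chi_e A_\omega\chi_e)}$. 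Now $\overline{H_\omega}=-\Delta+V_\omega=H_\omega$ as $\A=0$ and $V_\omega$ is real, so the functional calculus yields $\bar P=P$; and since the $x_\ell$ are real multiplication operators and $C$ is antilinear, $\overline{\partial_\ell A}=\overline{i[x_\ell,A]}=-i[x_\ell,\bar A]=-\partial_\ell\bar A$. Therefore
\[
\bar M=\bar P\,\bigl[\overline{\partial_j P},\overline{\partial_k P}\bigr]=P\,\bigl[-\partial_j P,-\partial_k P\bigr]=P\,[\partial_j P,\partial_k P]=M,
\]
so $\T(M)=\T(\bar M)=\overline{\T(M)}$, i.e.\ $\T(M)\in\R$. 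Combined with the previous paragraph this forces $\T(M)=0$, hence $\sigma^{(E_F)}_{j,k}=0$ for all $j,k$.

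I expect the main obstacle to be the routine but slightly delicate bookkeeping of the third paragraph: checking that conjugation by $C$ really preserves the covariant von Neumann algebra $\K_\infty$, the trace per unit volume $\T$, and the noncommutative $\mathrm{L}^p$-spaces — in particular that $\overline{\partial_\ell P}$ still belongs to $\mathrm{L}^2(\K_\infty)$, so that $M$ remains in $\mathrm{L}^1(\K_\infty)$ and $\T(M)$ continues to make sense — together with the accompanying point that ``$\A=0$'' has to be read as genuinely ``no magnetic field'' ($S=0$), which is exactly what makes $C$ commute with the projective representation $U$.
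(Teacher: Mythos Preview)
Your proof is correct and is precisely the time-reversal argument the paper alludes to (``if the system is time-reversible the off diagonal elements are zero'') but does not spell out; the paper states Corollary~\ref{corHall} without an explicit proof, and you have supplied the intended one. The two-step structure---$\T(M)\in i\R$ from antisymmetry plus $\T(M)\in\R$ from the antiunitary conjugation $C$ when $\A=0$ (hence $S=0$ and $CHC=H$)---is exactly right, and your remarks on the bookkeeping (that $A\mapsto CAC$ preserves $\K_\infty$, $\T$, and the $\mathrm{L}^p$-structure once $C$ commutes with the $U(a)$ and $\chi_e$) cover the only points requiring care.
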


We have the crucial following lemma for computing the Kubo-St\u{r}eda formula, which already appears in \cite{BESB} (and then in \cite{AG}).

\begin{lemma}\label{lemtriplecom} 
Let $P\in \K^\infty$ be a projection such that $\partial_k P\in \L^p(\K^\infty)$, then as operators in $\M(\K^\infty)$ (and thus in $\mathrm{L}^p(\K^\infty)$),
\begin{equation}\label{triplecom}
\partial_k P= \left[P, [P, \partial_k P] \right] .
\end{equation}
\end{lemma}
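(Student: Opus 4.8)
The plan is to use the algebraic identity $P^2 = P$ and its consequence for the derivation $\partial_k$, then expand the double commutator directly. First I would differentiate $P = P^2$ to obtain $\partial_k P = (\partial_k P) P + P (\partial_k P)$; this is legitimate since $\partial_k$ is a $\ast$-derivation and, by hypothesis, $\partial_k P \in \mathrm{L}^p(\K_\infty)$ while $P \in \K_\infty = \mathrm{L}^\infty(\K_\infty)$, so all products in sight lie in $\mathrm{L}^p(\K_\infty) \subset \M(\K_\infty)$ and the Leibniz rule applies in $\M(\K_\infty)$. Multiplying this relation on the left and on the right by $P$ and using $P^2 = P$ gives the two auxiliary identities $P (\partial_k P) P = 2 P (\partial_k P) P$, hence $P(\partial_k P) P = 0$, and equivalently $(\partial_k P) = (\partial_k P) P + P(\partial_k P)$ with no "diagonal" term.

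Next I would simply expand the right-hand side of \eqref{triplecom}. Write $Q := \partial_k P$ for brevity. Then
\begin{equation}
[P,[P,Q]] = [P, PQ - QP] = P(PQ - QP) - (PQ - QP)P = P^2 Q - PQP - PQP + QP^2.
\end{equation}
Using $P^2 = P$ this equals $PQ + QP - 2PQP$. Now invoke the identity $PQP = 0$ established in the first step to conclude $[P,[P,Q]] = PQ + QP$, and then invoke $Q = PQ + QP$ (the Leibniz identity with the diagonal term absent) to conclude $[P,[P,Q]] = Q = \partial_k P$, which is exactly \eqref{triplecom}. Each manipulation takes place in the $\ast$-algebra $\M(\K_\infty)$, where Theorem stating that $\M(\K_\infty)$ is a topological $\ast$-algebra guarantees associativity and bilinearity of the product; the membership in $\mathrm{L}^p(\K_\infty)$ then follows because $\mathrm{L}^p(\K_\infty)$ is a two-sided $\K_\infty$-submodule of $\M(\K_\infty)$ and $P \in \K_\infty$.

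I do not expect a serious obstacle here; the only points requiring a word of care are (i) justifying the Leibniz rule for $\partial_k$ on the product $P \cdot P$ when the range of $\partial_k$ is only in $\mathrm{L}^p(\K_\infty)$ rather than in $\K_\infty$ — this is covered by the definition of $\ast$-derivation together with the module structure, since one factor stays bounded — and (ii) making sure the formal cancellation $2PQP = 0$ is read correctly, i.e. that $PQP$ genuinely vanishes and is not merely idempotent-up-to-scalar. Both are one-line verifications. The mild subtlety worth flagging is that all identities must be interpreted in $\M(\K_\infty)$ first and only then restricted to $\mathrm{L}^p(\K_\infty)$; since convergence and algebraic operations in $\M(\K_\infty)$ are compatible with the $\mathrm{L}^p$ structure, no separate approximation argument is needed.
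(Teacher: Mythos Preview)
Your proof is correct and follows essentially the same approach as the paper: both use the Leibniz rule on $P=P^2$ to obtain $\partial_k P = P(\partial_k P)+(\partial_k P)P$, deduce $P(\partial_k P)P=0$, and then verify the double-commutator identity algebraically. The only cosmetic difference is that you expand $[P,[P,\partial_k P]]$ directly, whereas the paper rewrites $\partial_k P$ as $P(\partial_k P)(1-P)+(1-P)(\partial_k P)P$ and recognizes this as the double commutator.
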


\begin{proof}
Note that $\partial_k P  =  \partial_k P^2 = P\partial_k P + (\partial_k P) P$ so that multiplying left and right both sides by $P$ implies that $P(\partial_k P) P=0$. We then have, in $\L^p(\K^\infty)$,
\begin{eqnarray*}
\partial_k P & = &  P\partial_k P + (\partial_k P) P =  P\partial_k P + (\partial_k P) P - 2P(\partial_k P) P\\
& = & P (\partial_k P) (1-P) + (1-P)(\partial_k P) P \\
& = & \left[P, [P, \partial_k P] \right] .
\end{eqnarray*}
\end{proof}

Remark that
Lemma~\eqref{lemtriplecom} heavily relies on the fact $P$ is a projection. We shall apply it to the situation of zero temperature, i.e. when
the initial density matrix is the  orthogonal projection $ P^{(E_F)}$. The argument
would not go through at positive temperature.

\begin{proof}[Proof of Theorem~\ref{thmHall}] We again regularize the velocity
$\Db_{j,\omega}$ with a smooth function $f_n\in\mathcal{C}^\infty_c(\R)$,
$|f_n|\le 1$, $f_n=1$ on $[-n,n]$, but this time we also require that $f_n=0$ outside $[-n-1,n+1]$, so that $f_n \chi_{[-n-1,n+1]}=f_n$. Thus
 $\Db_{j} f_n(H)\in  \mathrm{L}^{p,o}(\K_\infty)$,  $0<p\le \infty$. Moreover
\begin{equation}\label{propfn}
f_n(H) (2\Db_{j}) f_n(H) = f_n(H) P_n (2\Db_{j}) P_n f_n(H) = - f_n(H) \partial_j (P_n H) f_n(H)
\end{equation}
where $P_n=P_n^2=\chi_{[-n-1,n+1]}(H)$ so that $H P_n$ is a bounded operator.
We have, using the centrality of the trace
 $\T$ , that
\begin{eqnarray}\label{eqhall00a}
\widetilde \sigma_{jk}^{(E_F)}(r)& :=  & -\T\left\{ 2\Db_{j,\omega}
\U^{(0)}(-r) ( \partial_{k} P^{(E_F)})  \right\}
\\
& = &   \label{eqhall00} -\lim_{n\to\infty} \T\left\{
\U^{(0)}(r)(f_n(H)2\Db_{j,\omega}f_n(H))   \partial_{k} P^{(E_F)} \right\} .
\end{eqnarray} 
Using Lemma~\ref{formulaCommutator} and applying Lemma~\ref{lemtriplecom} applied to $P=P^{(E_F)}$, it follows that
\begin{eqnarray}\label{eqhall009}
\lefteqn{\T\left\{ {\U^{(0)}}(r)(f_n(H)2\Db_{j}f_n(H))   \partial_{k} P^{(E_F)}  \right\}
 }\\ & = &
\T\left\{ {\U^{(0)}}(r)(f_n(H)2\Db_{j}f_n(H)) 
\left[P^{(E_F)}, \left[P^{(E_F)}, \partial_{k} P^{(E_F)}
\right]\right]  \right\} \nonumber
\\ & = &
\T\left\{{\U^{(0)}}(r)\left( \left[P^{(E_F)},
\left[P^{(E_F)}, f_n(H)2\Db_{j}f_n(H)
\right]\right] \right)     \partial_{k} P^{(E_F)} \right\} ,\nonumber
\\ & = &
- \T\left\{{\U^{(0)}}(r)\left( \left[P^{(E_F)},
f_n(H)\left[P^{(E_F)}, \partial_j (HP_n) 
\right]f_n(H)\right] \right)     \partial_{k} P^{(E_F)} \right\} ,\nonumber
\end{eqnarray}
where we used that $P^{(E_F)}$ commutes with $U^{(0)}$ and $f_n(H)$, and \eqref{propfn}.
Now, as elements in $\M(\K^\infty)$,
\begin{equation}\label{equalcom}
\left[ P^{(E_F)}, \partial_j HP_n \right]
=\left[H P_n , \partial_j P^{(E_F)} \right]  .
\end{equation}
Since $[H , \partial_j P^{(E_F)}]]$ is well defined by hypothesis, $f_n(H)\left[H P_n , \partial_j P^{(E_F)}\right]f_n(H)$ converges in $L^p$ to the latter as $n$ goes to infinity. Combining \eqref{eqhall00}, \eqref{eqhall009},
and  \eqref{equalcom}, we get after taking $n \rightarrow \infty$,
\begin{equation}\label{eqhall0090} \widetilde \sigma_{jk}^{(E_F)}(r) \
 = \
-\T\left\{{\U^{(0)}}(r)\left( \left[P^{(E_F)}, \left[H ,
\partial_j P^{(E_F)} \right] \right] \right)  \partial_k
 P^{(E_F)}\right\} \, .
\end{equation}
Next,
\begin{eqnarray}
\left[ P^{(E_F)},\left[H , \partial_j P^{(E_F)}
\right] \right]  =  \left[H, \left[ P^{(E_F)}
, \partial_j P^{(E_F)} \right] \right] ,
\end{eqnarray}
so that, recalling Proposition~\ref{Liouvillian},
\begin{eqnarray}\nonumber
\widetilde \sigma_{jk}^{(E_F)}(r)& = & -\T\left\{\U^{(0)}(r)\left(
\left[H, \left[ P^{(E_F)} ,
 \partial_j P^{(E_F)} \right]
\right] \right)  \partial_k P^{(E_F)}\right\}\\
&=& 
-\left \la  \e^{-ir\Ll} { \Ll_2} \left ( \left[
P^{(E_F)} , \partial_j P^{(E_F)} \right] \right ),
        \partial_k P^{(E_F)} \right\ra_{\L^2}  \label{eqhall01} ,
\end{eqnarray}
where $\la A,B\ra_{\L^2}=\T(A^\ast B)$. Combining \eqref{sigmajk}, \eqref{eqhall00a}, and \eqref{eqhall01},
we get
\begin{equation} \label{sigmaL}
\sigma_{jk}^{(E_F)}(\eta)  = - \left \la    i  \left(\Ll_2+
i\eta \right)^{-1}{ \Ll_2} \left ( \left[ P^{(E_F)} ,
  \partial_j P^{(E_F)} \right] \right ),
         \partial_k P^{(E_F)} \right\ra_{\L^2}  .
\end{equation}
It follows from the spectral theorem  (applied to
$\mathcal{L}_2$) that
\begin{equation}\label{stL}
\lim_{\eta\to 0} \left(\Ll_2+ i\eta \right)^{-1}{ \Ll_2} =
P_{(\mathrm{Ker}\, \mathcal{L}_2)^\perp}  \;\;\mbox{strongly in $\L^2(\K_\infty)$} \, ,
\end{equation}
where
 $P_{(\mathrm{Ker}\, \mathcal{L}_2)^\perp}$
is the orthogonal projection onto ${(\mathrm{Ker} \,\mathcal{L}_2)^\perp}$.  Moreover, as in \cite{BGKS} one can prove that
\begin{equation}\label{Pperp}
\left[ P^{(E_F)} ,  \partial_j P^{(E_F)} \right]
\in{(\mathrm{Ker}\, \mathcal{L}_2)^\perp}\, .
 \end{equation}
Combining \eqref{sigmaL}, \eqref{stL},  \eqref{Pperp}, and
 Lemma~\ref{formulaCommutator}, we get
\begin{equation}\nonumber
 \sigma_{j,k}^{(E_F)} = i  \left \la 
\left[ P^{(E_F)} , \partial_j P^{(E_F)} \right],
        \partial_k P^{(E_F)}  \right\ra_{\L^2} 
=  -i \T\left\{    P^{(E_F)}  \left[ \partial_j P^{(E_F)} ,
       \partial_k P^{(E_F)} \right]\right\}\,, \nonumber
\end{equation}
which is just  \eqref{expHall}.  
\end{proof}

\begin{remark}\label{remAC}
If one is interested in the AC-conductivity, then the proof above is valid up to \eqref{sigmaL}. In particular, with $\mathcal{E}(t)=\Re\e^{i\nu t}$, one obtains
\begin{equation} \label{sigmaLac}
\sigma_{jk}^{(E_F)}(\eta)  = - \Re\left \la    i  \left(\Ll_2+\nu+
i\eta \right)^{-1}{ \Ll_2} \left ( \left[ P^{(E_F)} ,
  \partial_j P^{(E_F)} \right] \right ),
         \partial_k P^{(E_F)} \right\ra_{\L^2}  .
\end{equation}
The limit $\eta\to 0$ can still be performed as in \cite[Corollary~3.4]{KLM}. It is the main achievement of  \cite{KLM} to be able to investigate the behaviour of this limit as $\nu\to 0$ in connection with Mott's formula.
\end{remark}

%%%%%%%%%%%%%%%%%%%%%%%%%%%%%%%%%%%%%%%%%%%%%%%%%%%%%%%%%%%%%%%%%%%%%%%%%%%%%%%%

%%%%%%%%%%%%%%%%%%%%%%%%%%%%%%%%%%%%%%%%%%%%%%%%%%%%%%%%%%%%%%%%%%
%%%%%%%%%%%%%%%%%%%%%%%%%%%%%%%%%%%%%%%%%%%%%%%%%%%%%%%%%%%%%%%%%%
%%%%%%%%%%%%%%%%%%%%%%%%%%%%%%%%%%%%%%%%%%%%%%%%%%%%%%%%%%%%%%%%%%
%%%%%%%%%%%%%%%%%%%%%%%%%%%%%%%%%%%%%%%%%%%%%%%%%%%%%%%%%%%%%%%%%%


\begin{thebibliography}{ThKNN}


\bibitem[AENSS]{AENSS} Aizenman, M., Elgart, A., Naboko, S., Schenker, J.H.,
Stolz, G.: Moment Analysis for Localization in Random Schr\"odinger Operators.
2003 Preprint, math-ph/0308023.

%\bibitem[AES]{AES} Aizenman, M., Elgart, A., Schenker, J.H.:
%Adiabatic charge transport,localization, and the Kubo formula for $2D$
%Hall conductance in a gapless state. In preparation.

\bibitem[AG]{AG} {Aizenman, M.,   Graf, G.M.:}
{ Localization bounds for an electron gas}, J. Phys. A: Math.
Gen. {\bf 31}, 6783-6806, (1998).


\bibitem[AvSS]{ASS} Avron, J., Seiler, R., Simon, B.: Charge deficiency,
charge transport and comparison of dimensions.  Comm. Math. Phys.~{\bf 159},
399-422 (1994).

\bibitem[B]{Be}  Bellissard, J.: Ordinary quantum {H}all effect and
noncommutative cohomology. In {\em Localization in disordered systems (Bad
Schandau, 1986)}, pp. 61-74. Teubner-Texte Phys. {\bf 16}, Teubner, 1988.

\bibitem[BES]{BESB} Bellissard, J., van Elst, A., Schulz-Baldes, H.:
The non commutative geometry of the quantum Hall effect.
{J. Math. Phys.}~{\bf 35}, 5373-5451 (1994).

\bibitem[BH]{BH} Bellissard, J., Hislop, P.: Smoothness of correlations in the Anderson model at strong disorder.  Ann. Henri Poincar�  8, 1-26 (2007).

%\bibitem[BeG]{BeG}  Berthier, A., Georgescu, V.:
%On the point spectrum of Dirac operators.  J. Funct. Anal.  {\bf 71},  309-338 (1987).


\bibitem[BoGK]{BGK}  Bouclet, J.M.,   Germinet, F., Klein, A.:
Sub-exponential decay of operator kernels
 for functions of generalized Schr\"odinger operators.
Proc. Amer. Math. Soc.    \textbf{132} ,  2703-2712  (2004).

\bibitem[BoGKS]{BGKS}  Bouclet, J.M.,   Germinet, F., Klein, A., Schenker. J.:
Linear response theory for magnetic Schr�dinger operators in disordered media, J. Funct. Anal. \textbf{226}, 301-372 (2005)



%\bibitem[BrR]{BR} Bratteli, O., Robinson, D.W.: \emph{Operator Algebras and Quantum
%Statitical Mechanics I}.  Springer-Verlag, 1979.

%\bibitem[CG]{CG} cartan Godement analyse de Fourier sur les groupes abeliens localement compact
% Anale de l'E.N.S


\bibitem[CH]{CH} Combes,  J.M.,   Hislop, P.D.:{ Landau Hamiltonians with
random potentials: localization and the density of states}. Commun. Math.
Phys. {\bf 177}, 603-629 (1996).

\bibitem[CGH]{CGH} Combes, J.M.,   Germinet, F., Hislop, P.D.: Conductivity and current-current correlation measure. In preparation.


%%\bibitem[CT]{CT}
%Combes,  J.M.,  Thomas, L.: Asymptotic behaviour of eigenfunctions for
%multiparticle {S}chr\"odinger operators. {Comm. Math. Phys.}~{\bf 34}, 251--270
%(1973).

\bibitem[CoJM]{CJM} Cornean, H.D., Jensen, A., Moldoveanu, V.:
The Landauer-B\"uttiker formula and resonant quantum transport.  Mathematical physics of quantum mechanics,  45-53, Lecture Notes in Phys., 690, Springer, Berlin, 2006.

\bibitem[CoNP]{CNP} Cornean, H.D., Nenciu, G. Pedersen, T.: The Faraday effect revisited: general theory.  J. Math. Phys.  47  (2006),  no. 1, 013511, 23 pp.


%\bibitem[CyFKS]{CFKS}  Cycon, H.L.,   Froese, R.G.,  Kirsch, W., Simon, B.:
% {\em Schr\"odinger operators}. Heidelberg: Springer-Verlag, 1987.


%\bibitem[D]{D} Davies, E.B.: Kernel estimates for functions of second order
% elliptic operators.  Quart. J. Math. Oxford (2) {\bf 39}, 37-46 (1988).


\bibitem[D]{Dix} Dixmier, J.: Les alg\`ebres d'op\'erateurs dans l'espace Hilbertien (alg\`ebres de von Neumann), Gauthier-Villars 1969 and Gabay 1996.

\bibitem[Do]{Do} N. Dombrowski. PhD Thesis. In preparation.


%\bibitem[DMP2]{DMP2} Dorlas T.C. , Macris N. , Pul\'e J.V.: {
%Characterization of the Spectrum of the Landau Hamiltonian with
%delta impurities}. {Commun. Math. Phys.}~{\bf 204}, 367-396 (1999).


\bibitem[ES]{ES} Elgart, A. , Schlein, B.: Adiabatic charge transport
 and the Kubo formula for Landau Type Hamiltonians.
 Comm. Pure Appl. Math.  57,  590-615 (2004).


%\bibitem[F]{F} Faris, W.G.: \emph{Self-Adjoint Operators}.
%Lecture Notes in Mathematics {\bf 433}. Springer-Verlag, 1975.

\bibitem[FS]{FS}  Fr\"ohlich, J.,  Spencer, T.: {Absence of diffusion with
Anderson tight binding model for large disorder or low energy}. Commun.
Math. Phys. {\bf 88}, 151-184 (1983)


\bibitem[Geo]{Geo} Georgescu, V.: Private communication.

\bibitem[GK1]{GK1}  Germinet, F., Klein, A.: {Bootstrap Multiscale Analysis
and Localization in Random Media}. Commun. Math. Phys.~{\bf  222}, 415-448
(2001).

\bibitem[GK2]{GK2}  Germinet, F.,  Klein, A.: Operator kernel estimates for  functions of generalized Schr\"odinger operators.
Proc. Amer. Math. Soc. 131,  911-920  (2003).

\bibitem[GK3]{GK3} Germinet, F,  Klein, A.:
Explicit finite volume criteria for localization in continuous
 random media and applications. Geom. Funct. Anal. 13, 1201-1238 (2003).

%\bibitem[GK4]{GK4} Germinet, F.,  Klein, A.: {A characterization of the
%Anderson metal-insulator transport transition}.
%Duke Math. J. \textbf{124} (2004).

\bibitem[GrHK]{GHK} Ghribi, F., Hislop, P., Klopp, F.: Localization for Schr\"odinger operators with random vector potentials. Contemp. Math. To appear.

\bibitem[KLP]{KLP} Kirsch, W. Lenoble, O. Pastur, L.: On the Mott formula for the ac conductivity and binary correlators in the strong localization regime of disordered systems.  J. Phys. A  36, 12157-12180 (2003).


%\bibitem[HS]{HS} Helffer, B.,  Sj\"ostrand, J.:  Equation de Schr\"odinger
%avec champ magn\'etique et \'equation de Harper. In
%\emph{Schr\"odinger Operators}, H. Holden and A. Jensen, eds.,
%pp. 118-197. Lectures Notes in Physics \textbf{345},
%Springer-Verlag, 1989.

%\bibitem[HuS]{HuS}  Hunziker W.,  Sigal, I.M.:  Time-dependent scattering theory for
%$N$-body quantum systems.  Rev. Math. Phys. \textbf{12}, 1033-1084  (2000).

%\bibitem[K]{Ka}   Kato, T.: \emph{Perturbation Theory for Linear Operators}.
%Springer-Verlag, 1976.

%\bibitem[KeS]{KSB} Kellendonk, J., Schulz-Baldes, H.,
%Quantization of edge currents for continuous magnetic operators.
% J. Funct. Anal.~{\bf 209},  388-413 (2004).

\bibitem[KLM]{KLM} Klein, A., Lenoble, O., M\"uller, P.: On Mott's formula for the AC-conductivity in the Anderson model. Annals of Math. To appear.

\bibitem[KM]{KM} Klein, A, M\"uller, P.: The conductivity measure for the Anderson model. In preparation.

\bibitem[Ku]{Ku} Kunz, H.:
{The Quantum Hall Effect for Electrons in a Random Potential}.
Commun. Math. Phys.~{\bf 112}, 121-145 (1987).

\bibitem[LS]{LS} { H. Leinfelder, C.G.  Simader},
{\it Schr\"odinger operators with singular magnetic potentials},
Math. Z. 176, 1-19 (1981).

\bibitem[MD]{MD} Mott, N.F., Davies, E.A.: Electronic processes in non-crystalinne materials. Oxford: Clarendon Press 1971.

\bibitem[NB]{NB} Nakamura, S., Bellissard, J.:
{Low Energy Bands do not Contribute to Quantum Hall Effect}.
Commun. Math. Phys.~{\bf 131}, 283-305 (1990).

\bibitem[Na]{Na} Nakano, F.: Absence of transport in Anderson localization.
 Rev. Math. Phys.~{\bf 14},  375-407 (2002).
 
\bibitem[P]{Pa}  Pastur, L., Spectral properties of disordered systems in the one-body approximation.  Comm. Math. Phys.~{\bf  75}, 179-196  (1980).

\bibitem[PF]{PF}  Pastur, L.,  Figotin, A.: {\em Spectra of Random and
Almost-Periodic Operators}.   Springer-Verlag, 1992.

%\bibitem[RS1]{RS1}  Reed, M.,  Simon, B.:  \emph{Methods of Modern
%Mathematical Physics I: Functional Analysis}, revised and enlarged edition.
% Academic Press, 1980.

%\bibitem[RS2]{RS2}  Reed, M.,  Simon, B.:  \emph{Methods of Modern
%Mathematical Physics II: Fourier Analysis, Self-Adjointness}.
%Academic Press, 1975.

%\bibitem[RS4]{RS4}  Reed, M.,  Simon, B.:  \emph{Methods of Modern
%Mathematical Physics IV: Analysis of Operators}.
%Academic Press, 1978.



\bibitem[SB1]{SBB1} Schulz-Baldes, H., Bellissard, J.:
Anomalous transport: a mathematical framework.
Rev. Math. Phys.~{\bf 10}, 1-46 (1998).

\bibitem[SB2]{SBB2} Schulz-Baldes, H., Bellissard, J.: A Kinetic Theory
 for Quantum Transport in Aperiodic Media. J. Statist. Phys.~{\bf 91},
 991-1026 (1998).

%\bibitem[Si1]{Si1}  Simon, B.: Maximal and minimal Schr\"odinger forms.
%J. Operator Theory \textbf{1}, 37-47 (1979).


%\bibitem[Si2]{Si2}  Simon, B.: {Schr\"odinger semi-groups}.  Bull. Amer.
%Math. Soc. {\bf 7},  447-526 (1982).

\bibitem[St]{St} St\u{r}eda, P.: {Theory of quantised Hall conductivity
    in two dimensions}. J. Phys. C. {\bf 15}, L717-L721 (1982).

%\bibitem[T]{T}  Takesaki, M.:  \emph{ Theory of Operator Algebras I,.}  %Springer-Verlag, 1979.

%\bibitem[T2]{T2}  Takesaki, M.:  \emph{ Theory of Operator Algebras II,.}  %Springer-Verlag, 1979.

\bibitem[Te]{Te} Terp, M.: $\mathrm{L}^p$ spaces associated with von Neumann algebras. Notes, Math. Instititue, Copenhagen university 1981.

\bibitem[ThKNN]{TKNN}  Thouless, D. J., Kohmoto, K., Nightingale, M. P.,
den Nijs, M.: {Quantized Hall conductance in a two-dimensional periodic
potential}. Phys. Rev. Lett. {\bf 49}, 405-408 (1982).

\bibitem[U]{U} Ueki, N.: Wegner estimates and localization for Gaussian random potentials.  Publ. Res. Inst. Math. Sci.  40, 29-90 (2004).

\bibitem[W]{Wa} Wang, W.-M.: {Microlocalization, percolation, and
Anderson localization for the magnetic Schr\"odinger operator with a
random potential}.  J. Funct. Anal.~\textbf{146}, 1-26  (1997).

\bibitem[Y]{Y} Yosida, K.:  \emph{Functional Analysis, 6th edition.}
Springer-Verlag, 1980.



\end{thebibliography}
\end{document}